\newtheorem{theorem}{Theorem}%{定理}[section]
\newtheorem{lemma}[theorem]{Lemma}%{引理}
\newtheorem{remark}{Remark}
\begin{document}
%
% paper title
% Titles are generally capitalized except for words such as a, an, and, as,
% at, but, by, for, in, nor, of, on, or, the, to and up, which are usually
% not capitalized unless they are the first or last word of the title.
% Linebreaks \\ can be used within to get better formatting as desired.
% Do not put math or special symbols in the title.
\title{Communication-and-Computation Efficient Split Federated Learning: Gradient Aggregation and Resource Management}%Bare Demo of IEEEtran.cls\\ for IEEE Journals  %%%%%%%%%%%%%  gradient aggregation design and resource %Co-designing Communication, Computation, and Layer Control for management
%
%
% author names and IEEE memberships
% note positions of commas and nonbreaking spaces ( ~ ) LaTeX will not break
% a structure at a ~ so this keeps an author's name from being broken across
% two lines.
% use \thanks{} to gain access to the first footnote area
% a separate \thanks must be used for each paragraph as LaTeX2e's \thanks
% was not built to handle multiple paragraphs
%

\author{Yipeng~Liang,~\IEEEmembership{Graduate Student Member,~IEEE,}
        Qimei~Chen,~\IEEEmembership{Member,~IEEE,}
        Rongpeng~Li,~\IEEEmembership{Member,~IEEE,}
        Guangxu~Zhu,~\IEEEmembership{Member,~IEEE, }
        Muhammad~Kaleem~Awan,~ %\IEEEmembership{Member,~IEEE, }
        and~Hao~Jiang,~\IEEEmembership{Member,~IEEE}%Life~Fellow,~IEEE}% <-this % stops a space
\thanks{Yipeng~Liang, Qimei~Chen, and Hao~Jiang are with the School of Electronic Information, Wuhan University, Wuhan 430072, China (e-mail: liangyipeng@whu.edu.cn, chenqimei@whu.edu.cn, jh@whu.edu.cn).}% <-this % stops a space
\thanks{Guangxu~Zhu is with Shenzhen Research Institute of Big Data, Shenzhen, 518172, China (e-mail: gxzhu@sribd.cn).}
\thanks{Rongpeng~Li is with the College of Information Science and Electronic Engineering, Zhejiang University, Hangzhou 310027, China (e-mail:  lirongpeng@zju.edu.cn).}
\thanks{Muhammad~Kaleem~Awan is with Oman Telecommunications Company Ltd., PC 112, Ruwi, Oman (e-mail: Muhammad.Awan@omantel.om).}
\thanks{This manuscript is a preliminary version of the work and may be subject to further revisions.}
% \thanks{J. Doe and J. Doe are with Anonymous University.}% <-this % stops a space
% \thanks{Manuscript received April 19, 2005; revised August 26, 2015.}
}

\maketitle

% As a general rule, do not put math, special symbols or citations
% in the abstract or keywords.
\begin{abstract}
With the prevalence of Large Learning Models (LLM), Split Federated Learning (SFL), which divides a learning model into server-side and client-side models, has emerged as an appealing technology to deal with the heavy computational burden for network edge clients. 
However, existing SFL frameworks would frequently upload smashed data and download gradients between the server and each client, leading to severe communication overheads. 
To address this issue, this work proposes a novel communication-and-computation efficient SFL framework, which allows dynamic model splitting (server- and client-side model cutting point selection) and broadcasting of aggregated smashed data gradients. 
We theoretically analyze the impact of the cutting point selection on the convergence rate of the proposed framework, revealing that model splitting with a smaller client-side model size leads to a better convergence performance and vise versa.
Based on the above insights, we formulate an optimization problem to minimize the model convergence rate and latency under the consideration of data privacy via a joint Cutting point selection, Communication and Computation resource allocation (CCC) strategy. 
To deal with the proposed mixed integer nonlinear programming optimization problem, we develop an algorithm by integrating the Double Deep Q-learning Network (DDQN) with convex optimization methods.
Extensive experiments validate our theoretical analyses across various datasets, and the numerical results demonstrate the effectiveness and superiority of the proposed communication-efficient SFL  compared with existing schemes, including parallel split learning and traditional SFL mechanisms.

\end{abstract}

% Note that keywords are not normally used for peerreview papers.
\begin{IEEEkeywords}
Communication-and-computation efficient, LLM, distributed training, edge AI, federated split learning, resource allocation.
\end{IEEEkeywords}

% For peer review papers, you can put extra information on the cover
% page as needed:
% \ifCLASSOPTIONpeerreview
% \begin{center} \bfseries EDICS Category: 3-BBND \end{center}
% \fi
%
% For peerreview papers, this IEEEtran command inserts a page break and
% creates the second title. It will be ignored for other modes.
\IEEEpeerreviewmaketitle

\section{Introduction}
% The very first letter is a 2 line initial drop letter followed
% by the rest of the first word in caps.
% 
% form to use if the first word consists of a single letter:
% \IEEEPARstart{A}{demo} file is ....
% 
% form to use if you need the single drop letter followed by
% normal text (unknown if ever used by the IEEE):
% \IEEEPARstart{A}{}demo file is ....
% 
% Some journals put the first two words in caps:
% \IEEEPARstart{T}{his demo} file is ....
% 
% Here we have the typical use of a "T" for an initial drop letter
% and "HIS" in caps to complete the first word.

% \IEEEPARstart{T}{his} demo file is intended to serve as a ``starter file''
% for IEEE journal papers produced under \LaTeX\ using
% IEEEtran.cls version 1.8b and later.

% You must have at least 2 lines in the paragraph with the drop letter
% (should never be an issue)
% I wish you the best of success.

With significant advancements in Artificial Intelligence (AI), future 6G networks are envisioned to transition from ``connected things" to ``connected intelligence" by decentralizing AI from the central cloud to edge networks \cite{LetaiefJSAC2022}. This evolution aims to support edge AI vision in the 6G networks, enabling pervasive intelligence to support emerging intelligent applications such as, eXtended Reality (XR), intelligent transportation systems,  and Internet of Things (IoT) \cite{LetaiefMag2019, SaadNetwork2020, ShiTutorial2020}. 

In this context, Distributed Collaborative Machine Learning (DCML) has emerged as a pivotal technology, particularly due to its inherent data privacy advantages.
Among various DCML approaches, Federated Learning (FL) and Split Learning (SL) have become particularly attractive in recent years. 
FL facilitates the training of a complete Machine Learning (ML) model through collaboration between a central server and distributed clients, without requiring clients to share their local data \cite{LiSPM2020, LiangTWC2024, Chen2022IOTJ}. However, while FL supports parallel model training across multiple clients, resource-constrained devices in edge networks, such as those in IoT environments, often struggle to handle the computational demands of training complete ML models \cite{ChenIOTJ2024}, especially Large Language Models (LLMs). 
To deal with this issue, SL offers a promising solution by offloading a portion of the computational burden to the server. Specifically, SL divides a complete ML model into smaller network portions, deploying one portion on the client and the other on the server. The vanilla SL conducts model training sequentially, with the server interacting with clients one by one to update the model \cite{Vepakomma2018}. However, this sequential approach introduces significant latency, particularly when managing a large number of clients. Moreover, it can lead to catastrophic forgetting, which severely impacts learning performance \cite{DuanSensor2022}. To overcome these limitations, Split Federated Learning (SFL) combines the strengths of FL and SL, enabling parallel model training while alleviating the computational burden on clients \cite{ThapaAAAI2023}. 
With the recent prevalence of LLMs, SFL presents considerable potential for facilitating their training and inference at the edge of 6G networks. %\cite{lin2024}

\begin{figure*}
	\centering  %图片全局居中
    \setlength{\abovecaptionskip}{0.cm}
	\vspace{-0.35cm} %设置与上面正文的距离
	\includegraphics[width=0.8\linewidth]{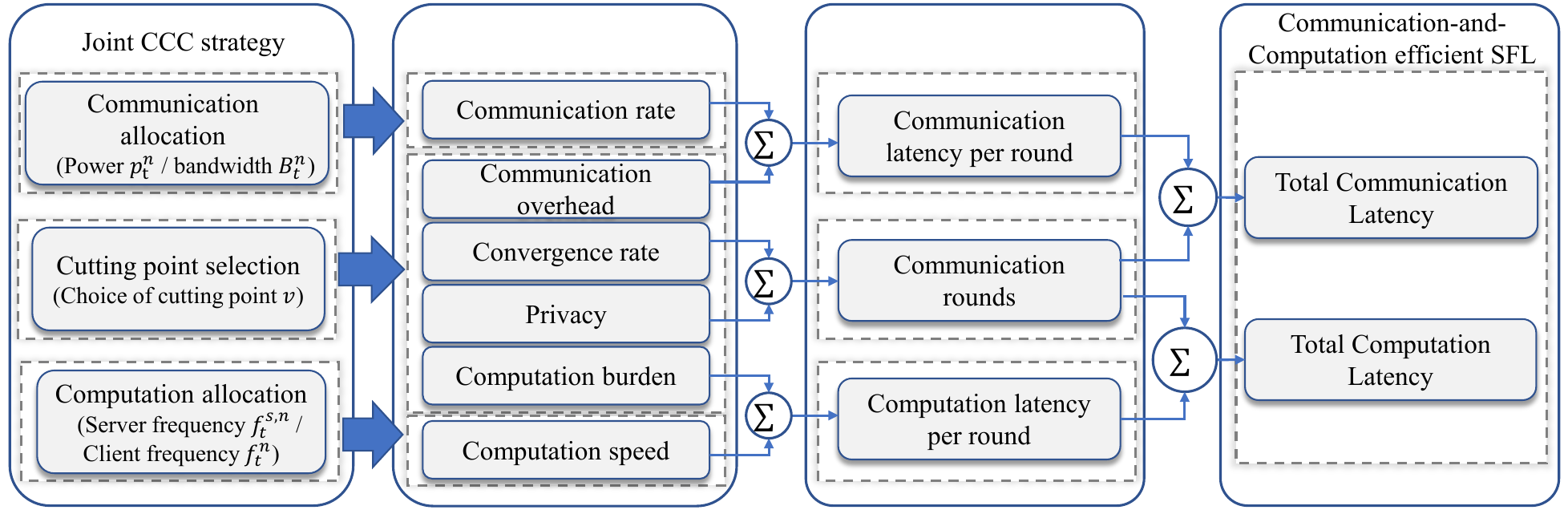}
	\caption{Illustration of the proposed joint CCC strategy.}
	\label{CCCdesign}
	% \vspace{-0.5cm}
\end{figure*}

Despite the advantageous integration of SL and FL, SFL necessitates frequent exchanges of information, such as smashed data and corresponding gradients, between the server and clients to update both server-side and client-side models. Additionally, the synchronous aggregation of client-side models at the server introduces additional communication overhead. Consequently, communication overhead has become a significant challenge in SFL. Although SFL has garnered increasing attention in recent years, efforts to mitigate this communication overhead remain limited. To address this gap, this work proposes a co-design of Cutting point selection, Communication and Computation resource allocation (CCC) strategy for SFL.

\subsection{Related Work}
Different from existing DCML approach such as FL, the research on SFL is still in its early stages. The SFL framework was proposed in \cite{ThapaAAAI2023}, enabling parallel training and synchronous aggregation of both client-side and server-side models, which has recently garnered significant interest in various fields, including medical image segmentation \cite{KafshgariArXiv2023}, wireless networks \cite{XuTWC2023}, and emotion detection \cite{WarefNILES2022}.

Several existing works aim to reduce training latency arising from the sequential training process and client heterogeneity in SFL. 
For example, the authors in \cite{WuJSAC2023} introduced a cluster-based approach that partitions clients into multiple clusters. Within each cluster, client-side models are trained and aggregated in parallel, followed by sequential training of server-side and client-side models across clusters. 
A resource management algorithm was proposed in \cite{WenArxiv2023} to minimize training latency of SFL by jointly optimizing the selection of the cutting point and the allocation of computational resources. 
In \cite{ShenArXiv2023}, an FedPairing scheme was proposed to enhance training efficiency by pairing clients with varying computational resources.
A communication-and-storage efficient SFL framework was explored in \cite{MuICC2023}, where an auxiliary network was integrated into the client-side model to facilitate local updates. This approach maintained a single server-side model at the server, thereby eliminating the need to transmit gradients from the server.

Besides, data privacy and security in SFL have received increasing attention. The authors in \cite{ZhangBigData2023} investigated a privacy-aware SFL, where a client-based privacy approach was introduced to enhance resilience against attacks. In \cite{LeeArXiv2021}, the authors analyzed the tradeoff between privacy and energy consumption in SFL, with a particular focus on the impact of the cutting point selection.

Additionally, some works focus primarily on enhancing SFL performance considering the non-independent and identically distributed (Non-IID) data across clients. For instance, the MergeSFL framework was proposed in \cite{LiaoArXiv2023} addressed Non-IID challenges by employing feature merging and batch size regulation across different clients. 

However, these works suffer from communication overhead due to the synchronous client-side model aggregation. To address this issue, a Parallel Split Learning (PSL) framework was introduced to enhance communication efficiency by eliminating synchronous aggregation.
For instant, the authors in \cite{JeonICOIN2020} presented a PSL method to prevent overfitting through minibatch size selection and client layer synchronization at each client. 
In \cite{LinTMC2024}, a last-layer gradient aggregation scheme was proposed for PSL to reduce training and communication latency at the server. A joint subchannel allocation, power control, and cutting point selection strategy was further proposed, considering heterogeneous channel conditions and computing capabilities among clients.
In \cite{KimWCNC2023}, the authors explored a personalized PSL framework to address Non-IID issues, employing a bisection method with a feasibility test to optimize the tradeoff between energy consumption for computation and wireless transmission, training time, and data privacy.
A local-loss-based training method was proposed in \cite{HanICML2021} to expedite the PSL training process by incorporating an auxiliary network into the client-side model, serving as the local loss function for model updates.

Despite these efforts, previous approaches for SFL still suffer from communication overhead. Specifically, clients are still required to individually upload smashed data to the server and download the corresponding gradients to update both client-side and server-side models. This process leads to substantial communication burden, particularly in environments with limited communication resources.

\subsection{Motivation and Contribution}
Motivated by the above critical issue, we explore a novel communication-and-computation efficient SFL with Gradient Aggregation (SFL-GA) framework in this work. 
Specifically, the SFL-GA framework enables dynamic model cutting point selection based on the wireless communication environment, privacy requirements, and computation abilities of edge devices.  
According to the model splitting, gradients of the smashed data at the server are aggregated, and then broadcasting to all the devices to effectively reduce communication overhead.
For further communication-and-computation efficiency enhancement, we introduce a joint CCC strategy as shown in Fig. \ref{CCCdesign}.  
In detail,  the communication and computation resource allocation schemes influence the communication rate and computation speed, respectively. 
Meanwhile, the model cutting point affects  communication overhead, computation burden, convergence rate, and privacy leakage. 
These elements collectively influence both the communication and computation latency per round, as well as the number of communication rounds required for model convergence. 
Ultimately, the above elements collectively determine the overall communication and computation latency.

Overall, the main contributions of this work are summarized as follows.

\begin{itemize}
    \item \textbf{Communication-and-computation efficient SFL-GA framework:} We propose a novel SFL-GA framework, which enables dynamic model cutting point selection and aggregated gradient broadcasting. 
    Specifically, the cutting point is selected to improve the communication and computation efficiency. 
    On the other hand, we aggregate all the smashed data gradients before broadcasting instead of traditional  individual gradients feedback to each client, thus alleviates the communication overhead.

    \item \textbf{Theoretical convergence analysis and problem formulation:} A theoretical convergence analysis of our proposed framework is conducted, which reveals that cutting a smaller  client-side model leads to better convergence performance. 
    As a cost, it increases the risk of privacy leakage. 
    Based on these insights, we formulate a convergence rate and latency optimization problem under the limitation of communication and computation resources as well as the privacy leakage requirement, which is a Mixed-Integer Non-Linear Programming (MINLP).

    \item \textbf{Joint CCC strategy:} To address the MINLP issues, a joint CCC strategy that integrates double-deep Q-learning (DDQN) algorithm and convex optimization techniques is developed. Specifically, the problem is decomposed into two subproblems: resource allocation and cutting point selection. The resource allocation subproblem is resolved using existing convex optimization methods, while the cutting point selection subproblem is tackled with the DDQN algorithm.

    \item \textbf{Performance evaluation:} Numerical results are conducted to validate the theoretical analyses, and evaluate the superior performance of the proposed SFL-GA mechanism compared with benchmarks, including SFL, PSL, and FL.
    
\end{itemize}

% \hfill mds
 
% \hfill August 26, 2015

% \subsection{Organization}
The rest of this paper is organized as follows. Section \uppercase\expandafter{\romannumeral2} introduces the SFL-GA framework and corresponding system models. In Section \uppercase\expandafter{\romannumeral3}, we theoretically analyze the convergence performance for the proposed framework. In Section \uppercase\expandafter{\romannumeral4}, we formulate the optimization problem and design the resource allocation strategy. Numerical results are presented in Section \uppercase\expandafter{\romannumeral5} followed by a conclusion in Section \uppercase\expandafter{\romannumeral6}.

% \textit{Notations:} 
Throughout the paper, we use the following notation: We use $a$ to denote a scalar, $\mathbf{a}$ is a column vector, $\mathbf{A}$ is a matrix, and $|\cdot| $ represents the modulus operator. The Euclidean norm is written as $\left\| \cdot \right\| $, $\left\langle \mathbf{a}, \mathbf{a}^{\prime} \right\rangle$ is the inner product of $\mathbf{a}$ and $\mathbf{a}^{\prime}$, and $\mathbb{E}$ represents mathematical expectation.

\section{System Model}\label{SysMod}
% \begin{figure}
% 	\centering  %图片全局居中
%     \setlength{\abovecaptionskip}{0.cm}
% 	\vspace{-0.35cm} %设置与上面正文的距离
% 	\includegraphics[width=1\linewidth]{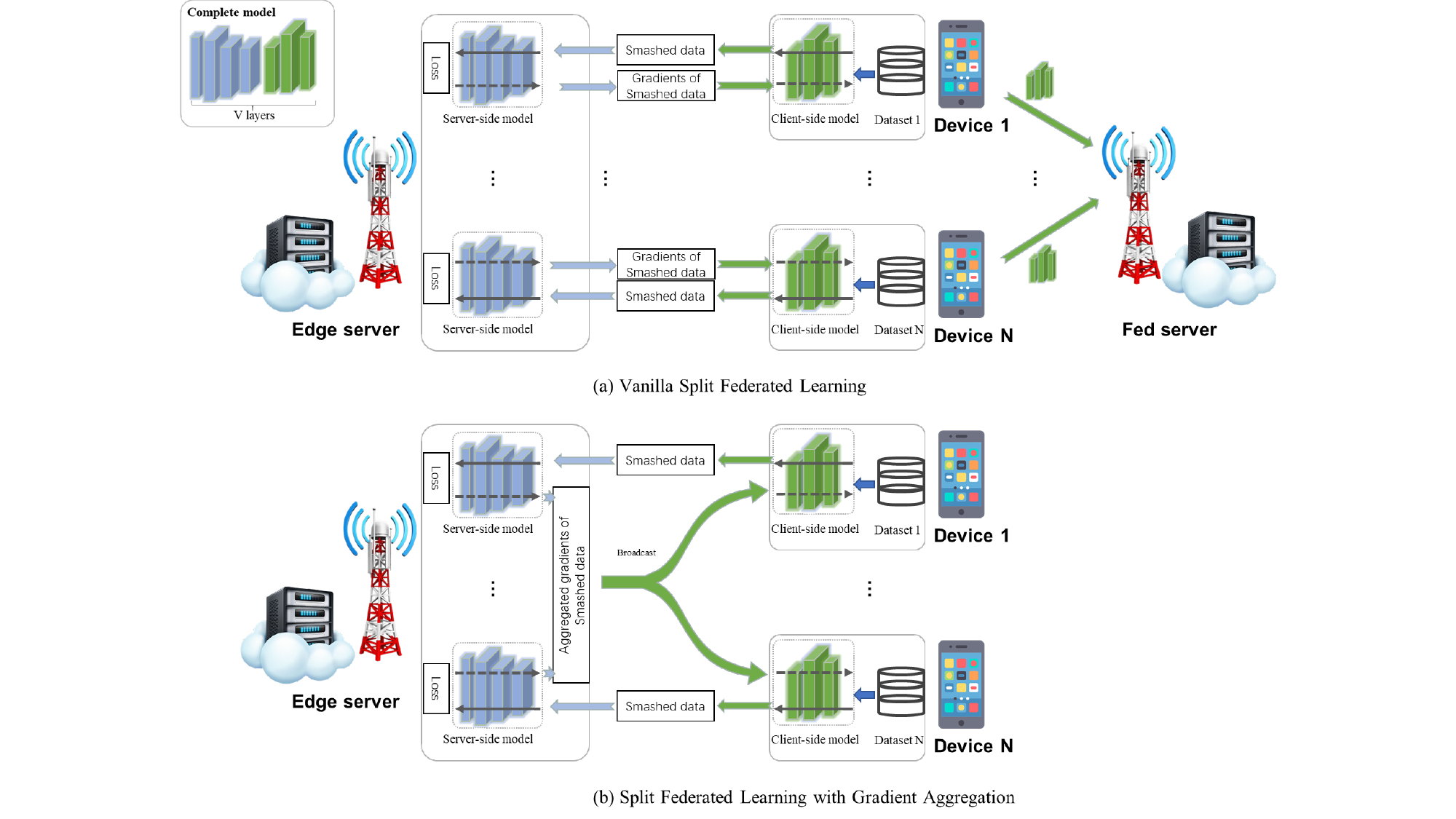}
% 	\caption{Proposed SFL-GA framework and existing SFL.}
% 	\label{sysmodel}
% 	\vspace{-0.5cm}
% \end{figure}

\begin{figure*}
	\centering
	\subfigure[Traditional Split Federated Learning]{
		\begin{minipage}[b]{0.8 \textwidth}
			% \centering
			\includegraphics[width=\textwidth]{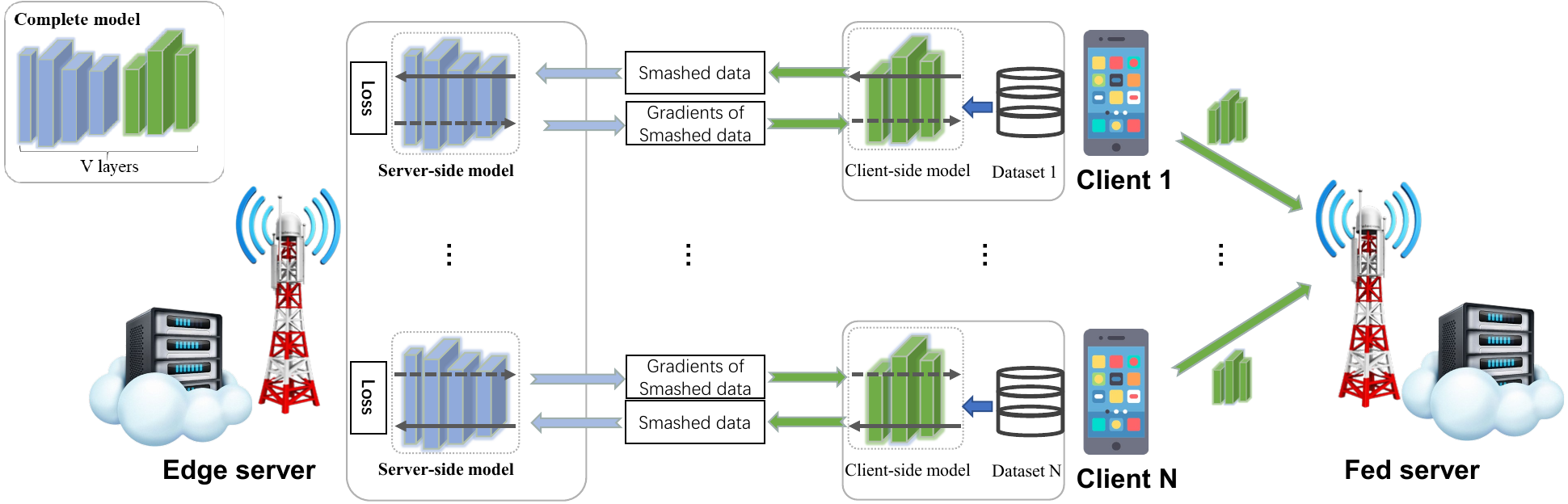} %width=5.0in
		\end{minipage}
	}%
 
	%此处的空行很重要，想让图片在什么地方换行就在代码对应位置空行
	\subfigure[Split Federated Learning with Gradient Aggregation framework]{
		\begin{minipage}[b]{0.8\textwidth}
			% \centering
			\includegraphics[width=\textwidth]{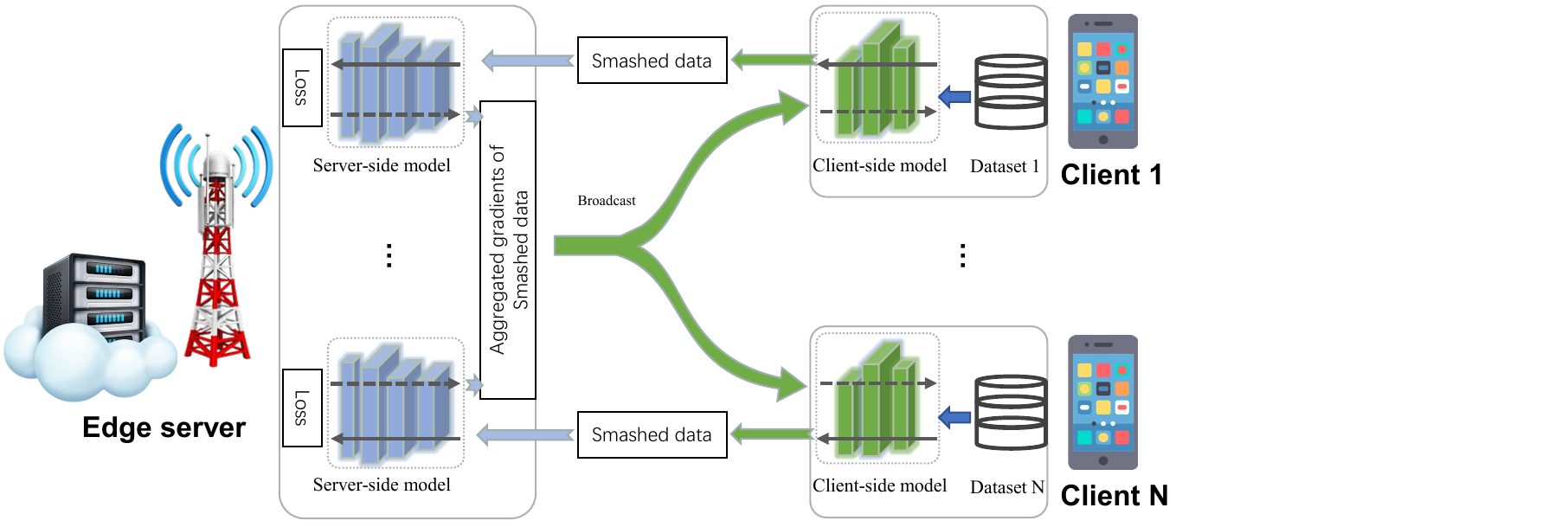} %width=2.2in
		\end{minipage}
	}%
	% \centering
	\caption{Proposed SFL-GA framework and Traditional SFL.}
	\label{sysmodel}
\end{figure*}

In this section, we first introduce a novel SFL-GA framework, and then discuss the related system model.

\subsection{SFL-GA Framework}
As shown in Fig.\ref{sysmodel}, we consider an SFL wireless network with one server and a set of clients denoted by $\mathcal{N} \triangleq \{1, 2, ..., N \}$. All of the clients are collaboratively training a shared Machine Learning (ML) model $\mathbf{w}^{} \in \mathbb{R}^{q}$ of size $q$ and layer $V$ for a specific data analysis task, such as classification and recognition. 
In the SFL framework, each client splits/cuts its learning model at the 
$v \in \mathcal{V}\triangleq\{1, 2,3,...,V-1\}$-th layer. Thus, the entire model divided into the client-side model $\mathbf{w}^{c} \in \mathbb{R}^{\phi\left(v\right)}$ and the server-side model $\mathbf{w}^{s} \in \mathbb{R}^{q-\phi\left(v\right)}$, deployed on the client and server for model training, respectively. Here, $\phi\left(v\right)$ denotes the client-side model size, which depends on the cutting point $v$. 

Therefore, the desired ML model is collaboratively trained by the server and clients over $T$ communication rounds. During each communication round $t \in \mathcal{T} \triangleq \{ 1, 2, ..., T\}$, the training process unfolds as follows:

\begin{itemize}
    \item [1)] \textbf{Smashed data generation}: All clients conduct forward propagation (FP) based on their local datasets and generate smashed data.

    \item [2)] \textbf{Server-side model update}: All clients transmit the smashed data, along with the corresponding labels, to the server. Thereafter, the server performs FP and calculates the loss function in parallel based on the received smashed data and their corresponding labels. 
    Then, the server executes back propagation (BP) to obtain the updated server-side models and the smashed data gradients.

    \item [3)] \textbf{Server-side  gradients aggregation}: The server aggregates these updated server-side models into a new global server-side model, as well as aggregates the gradients of the smashed data.

    \item [4)] \textbf{Gradient  broadcast}: The server %aggregates the gradients of the smashed data and then
    broadcasts the aggregated gradients of the smashed data to all the clients. Unlike the traditional SFL, where the gradients of smashed data are transmitted to corresponding clients separately, our proposed framework effectively reduces communication overhead.

    \item [5)] \textbf{Client-side model BP}: The clients perform client-side model BP and update their client-side models based on the received gradients. Since the client-side models have identical weight parameters and are updated based on the same gradients, synchronous client-side model aggregation is eliminated, leading to a further communication overhead reduction.

\end{itemize}

The core concept of the proposed framework is to aggregate the gradients of smashed data from all the clients and then broadcast the aggregated gradients to these clients. Compared with the traditional SFL mechanism, it can significantly reduced the communication overhead.

\subsection{SFL-GA Model}
Denote that the complete ML model $\mathbf{w}^{}$ is partitioned into the server-side model $\mathbf{w}^{s}$ and the client-side model $\mathbf{w}^{c}$, denoted as $\mathbf{w}^{} = \left[\mathbf{w}^{s}; \mathbf{w}^{c}\right]$. %, as shown in Fig.1
Moreover, each client $n$ is assumed to possess a local dataset $\mathcal{D}^n$ with size of $D^n$. 

For communication round $t$, all clients perform client-side model FP in parallel based on their own client-side models $\mathbf{w}^{c}_{t-1}$. Specifically, each client $n$ takes a mini-batch of samples $\xi^n$, randomly chosen from $\mathcal{D}^n$, as the input for $\mathbf{w}^{c}_{t-1}$ to obtain the smashed data, which can be expressed as
\begin{equation}
     \mathbf{S}_{t}^{n} = \ell \left(\mathbf{w}^{c}_{t-1}; \xi^n \right),  %\xi^n
 \end{equation}
where $\ell$ is the the client-side function mapping from the input data to the smashed data. %$\xi^n \in \mathcal{D}^n$ is a mini-batch of samples randomly chosen form $\mathcal{D}^n$.
Then, the server collects the smashed data along with their corresponding labels from all the clients for the server-side model update and aggregation. Specifically, the server first performs  FP to calculate the loss for each device by inputting the corresponding smashed data into the server-side model $\mathbf{w}^{s}_{t-1}$. Therefore, the local loss with respect to the complete model of device $n$ can be expressed as
\begin{equation}\label{localloss}
    \begin{split}
     F\left( \mathbf{w}^{ n}_{t-1} \right) &= F\left( \mathbf{w}^{s, n}_{t-1}, \mathbf{w}^{c}_{t-1}; \xi^n \right) \\
     &= \frac{1}{D^n} \sum_{(x_j, y_j) \in \xi^n } f \left( \mathbf{w}^{s, n}_{t-1}, \mathbf{w}^{c}_{t-1}; (x_j, y_j) \right),
    \end{split}
    \end{equation}
where $f$ is the loss function. $(x_j, y_j)$ is the $j$-th sample of $\xi^n$  with data $x_j$ and label $y_j$. % 
Subsequently, after server-side model BP, the server obtains the gradients of the loss function for client $n$'s model update, which is given by
\begin{equation}
    \mathbf{g}^{s,n}_{t-1} = \nabla_{\mathbf{w}^{s}} F\left( \mathbf{w}^{s, n}_{t-1}, \mathbf{w}^{c}_{t-1}; \xi^n  \right).
\end{equation}
Meanwhile, the gradients of the smashed data of each device $n$ are also computed as 
\begin{equation}
   \mathbf{s}_t^n = \nabla \mathbf{S}_{t}^{n}.
\end{equation} 

Different from the existing SFL in \cite{ThapaAAAI2023} where the server sends $\mathbf{s}_t^n$ to the corresponding client $n$, the server in this work broadcasts the aggregated gradients of smashed data to all clients, which is given by
\begin{equation} \label{aggreGradient}
    \mathbf{s}_t = \sum_{n=1}^{N} \rho^n \mathbf{s}_t^n,
\end{equation}
where $\rho^n = \frac{D^n }{D}$ with $ D = \sum_{n=1}^{N}D^n $ being the total size of datasets across clients.

After receiving $\mathbf{s}_t$ from the server, both the server and the clients can update the server- and client-side models, respectively.
Specifically, each client first computes the gradient of its client-side model, denoted by  $\mathbf{g}^{c}_{t}$, based on $\mathbf{s}_t$. The client-side model is then updated via gradient descent. Consequently, the complete ML model $\mathbf{w}^{n}_{t}$ of the client $n$ in the $t$-th communication round can be updated through multiple epochs, described as follows
\begin{equation}\label{modelupdate}
 \mathbf{w}^{n}_{t} =   
 \left[ 
\begin{matrix}
 \mathbf{w}^{s,n}_{t}\\
 \mathbf{w}^{c}_{t}
\end{matrix}
\right] 
= \left[ 
\begin{matrix}
 \mathbf{w}^{s}_{t-1}\\
 \mathbf{w}^{c}_{t-1}
\end{matrix}
\right]  
-\eta \sum_{i = 1}^{\tau} \left[ 
\begin{matrix}
 \mathbf{g}^{s,n}_{t-1,i}\\
 \mathbf{g}^{c}_{t-1,i}
\end{matrix}
\right]  ,
\end{equation}
where $\eta$ is the learning rate. $\tau$ is the number of local epochs. Since the model $\mathbf{w}^{c}_{t}$ at each client  is updated based on the same gradient $\mathbf{g}^{c}_{t-1}$ and the same parameters $\mathbf{w}^{c}_{t-1}$, each client attains the same model parameters after updating. Therefore, the proposed SFL-GA eliminates the necessity for client-side model aggregation as  \cite{ThapaAAAI2023}. Consequently, we only need to aggregate the server-side model at the server, which is given by
\begin{equation}
    \mathbf{w}^{s}_{t} = \sum_{n=1}^{N} \rho^n \mathbf{w}^{s,n}_{t}.
\end{equation}

As a result, with the complete global model $\mathbf{w}^{}_{t} = \left[ \mathbf{w}^{s}_{t}, \mathbf{w}^{c}_{t} \right] $ at the $t$-th communication round, the global loss function can be represented as
\begin{equation}\label{loss}
     F\left( \mathbf{w}^{}_{t}\right) = \sum_{n=1}^{N} \rho^n F\left( \mathbf{w}^{n}_{t}\right).
\end{equation}

Assuming the global model converges after $T$ communication rounds, the training objective of Eq. \eqref{loss} is to find a minimal global model $\mathbf{w}^* =\left[\mathbf{w}^{s*}; \mathbf{w}^{c*}\right]$ that satisfies
\begin{equation}
    \mathbf{w}^* = \arg \min_{\mathbf{w}} F\left( \mathbf{w}^{}_{T}\right) .
\end{equation}

\subsection{Communication Model}
In this subsection, we introduce the communication model in this work. 
We assume that the channel remains constant during a given communication round but may vary across different rounds. 
For an arbitrary communication round $t$, the communication process of each client includes an uplink phase for uploading smashed data and a downlink phase for broadcasting gradients.

Regarding the uplink communication, the total available bandwidth $B$ is divided into multiple orthogonal subchannels to transmit the smashed data and labels from each client to the server. Therefore, the achievable data rate of device $n$ can be expressed as %\mathrm{U}
\begin{equation}
    r_t^{n,U}=B^n_t \log _2\left(1+\frac{p^n_t g^n_t}{B^n_t N_0}\right) ,
\end{equation}
where $B^n_t$ is the bandwidth allocated to client $n$. $g^n_t$ is the channel gain between the server and device $n$. $p^n_t$ and $N_0$ denote the transmit power of device $n$ and the thermal noise spectrum density, respectively.

For the downlink communication, the server broadcasts the aggregated gradients \eqref{aggreGradient}, occupying the entire bandwidth. Thus, the achievable rate for device $n$ can be represented as 
\begin{equation}
    r_t^{n,D}=B \log _2\left(1+\frac{P g^n_t}{B N_0}\right),
\end{equation}
where $P$ denotes the transmit power of server.

Note that the size of smashed data and corresponding gradients depend on the cutting point $v$. Therefore, the latency of uplink and downlink transmission can be respectively written as %, and remain unchanged during every communication round
\begin{equation}
    l^{n,U}_{t} = \frac{X_{t}\left( v\right)}{r_t^{n,U}},
\end{equation}
\begin{equation}
    l^{n,D}_{t} = \frac{X_{t}\left( v\right)}{r_t^{n,D}},
\end{equation}
where $X_{t}\left( v\right)$ is the communication bit size of smashed data (and its gradient) related to cutting point $v$.%, and $X_1$ is the communication bit size of labels.

\subsection{Computation Model}
In this subsection, we present the computation model of the proposed SFL-GA. During communication round $t$, SFL initiates with client-side model FP.
Let $\gamma^{n}_F\left(v\right)$ denote the computation workload (in FLOPs) of client $n$ for performing FP with one data sample \cite{ ZhangCVPR2018, ZengTWC2021}.
Therefore, the latency for client-side model FP is given by
\begin{equation}
    l^{n,F}_{t} = \frac{D^n\gamma^{n}_F\left(v\right)}{f^n_t},  %\kappa
\end{equation}
where $f^n_t$ denotes the central processing unit (CPU) resource of device $n$. % and $\kappa$ denotes the computing intensity \cite{WuJSAC2023}.

Subsequently, SFL performs FP and BP to update the server-side model at the server. Let $\gamma^{s}_F\left(v\right)$ and $\gamma^{s}_B\left(v\right)$ denote the computation workload of the server for performing FP and BP with one data sample, respectively. Therefore, the latency for both server-side model FP and BP is given by
\begin{equation}
    l^{n,s}_{t} = \frac{D^n\left( \gamma^{s}_F\left(v\right)  + \gamma^{s}_B\left(v\right) \right)} { f^{s,n}_{t}},
\end{equation}
where $f^{s,n}_{t}$ is the CPU computation resource of server that allocated to server-side model of client $n$.

Finally, SFL-GA conducts client-side model BP to update the model at each client. Let $\gamma^{n}_B\left(v\right)$ denote the computation workload of client $n$ for performing BP with one data sample. Then, we have
\begin{equation}
    l^{n,B}_{t} = \frac{D^n\gamma^{n}_B\left(v\right)}{f^n_t}. %\kappa
\end{equation}

\subsection{Privacy  Model}
In the proposed SFL framework, the privacy concerns arise from the transmission of smashed data between clients and the server. 
Existing research has demonstrated the significant impact of cutting point on privacy leakage\cite{FredriksonACMCCCS2015, LeeArXiv2021, KimSMA2020}. Generally, a greater number of layers in the client-side model leads to poorer reconstruction results, making it more difficult to infer the original input data. Consequently, as the number of layers increases, the potential for privacy leakage diminishes. This insight underscores the importance of carefully selecting the cutting point in SFL to balance computational efficiency with privacy preservation.

In this work, to quantify privacy leakage, we adopt the privacy model introduced in \cite{KimWCNC2023}, which is formulated as
\begin{equation}
    \log(1 + \frac{\phi^{}_{t}\left(v\right)}{q}) \geq \epsilon,  \forall t,
\end{equation}
where $\epsilon$ represents a desired threshold for privacy protection.

\section{Theoretical Analysis and Performance Evaluation}
In this section, we present a theoretical analysis of the proposed SFL-GA framework. We start by analyzing the convergence of SFL-GA, followed by a discussion of its complexity and scalability.

\subsection{Convergence Analysis}
To facilitate analysis, we denote the mini-batch (stochastic) gradient and full-batch gradient of the loss function, respectively, as
\begin{equation}
     \mathbf{g}^{n}_{t} = \left[ \mathbf{g}^{s,n}_{t} ;  \mathbf{g}^{c}_{t} \right],       %\nabla_{\mathbf{w}^{s}} F\left( \mathbf{w}^{n}_{t}\right)
\end{equation}
and
\begin{equation}
    \mathbf{h}^{n}_{t} = \left[\nabla_{\mathbf{w}^{s}} F\left( \mathbf{w}^{n}_{t}\right); \tilde{\nabla_{\mathbf{w}^{c}} F\left( \mathbf{w}^{}_{t}\right)} \right].
\end{equation}

Recall that in traditional SFL, each client $n$ generates the gradient of the client-side model $\nabla_{\mathbf{w}^{c}} F\left( \mathbf{w}^{n}_{t}\right)$ based on its own gradient of smashed data $\mathbf{s}^n_t$, instead of the aggregated one in Eq. \eqref{modelupdate}, to update $\mathbf{w}^{c}_{t}$. This discrepancy may affect the convergence performance.
Therefore, we denote the full-batch gradients as 
\begin{equation}
    \nabla F(\mathbf{w}^{n}_{t}) = \left[\nabla_{\mathbf{w}^{s}} F\left( \mathbf{w}^{n}_{t}\right); \nabla_{\mathbf{w}^{c}} F\left( \mathbf{w}^{n}_{t}\right) \right],
\end{equation}
Accordingly, we further define the the SFL's global gradient as
\begin{equation}
    \begin{split}
        \nabla F(\mathbf{w}^{}_{t}) & = \sum_{n=1}^{N}\nabla F(\mathbf{w}^{n}_{t}) \\
          & = \left[\sum_{n=1}^{N} \nabla_{\mathbf{w}^{s}} F\left( \mathbf{w}^{n}_{t}\right); \sum_{n=1}^{N} \nabla_{\mathbf{w}^{c}} F\left( \mathbf{w}^{n}_{t}\right) \right].
    \end{split}
\end{equation}

To begin with, we introduce the following assumptions, which are commonly adopted in existing works, such as \cite{XuTWC2023,  HanICML2021}.%LiICLR2023,

\textbf{Assumption 1 (L-smoothness).} For any $\mathbf{w}, \mathbf{v}$, the loss function  is either continuously differentiable or Lipschitz continuous with a non-negative Lipschitz constant $L\geq 0$, which can be formulated as % \boldsymbol{v}
\begin{equation}
    F(\mathbf{v}) - F(\mathbf{w}) \leq (\mathbf{v}-\mathbf{w})^{\top} \nabla F(\mathbf{w})+\frac{L}{2}\|\mathbf{v}-\mathbf{w}\|^2.
\end{equation}

\textbf{Assumption 2 (Unbiased Gradient and Bounded Variance ).} 
% Let $\xi^n$ be a mini-batch of samples randomly chosen from the $n$-th client’s local data uniformly.  
For each client, the stochastic gradient is unbiased, i.e., $\mathbb{E} \left(\mathbf{g}^{s,n}_{t} \right) = \nabla_{\mathbf{w}^{s}} F\left( \mathbf{w}^{n}_{t}\right) $ and $\mathbb{E} \left( \mathbf{g}^{c}_{t} \right) = \tilde{\nabla_{\mathbf{w}^{c}} F\left( \mathbf{w}^{n}_{t}\right)} $. Moreover, 
the variance of stochastic gradients of each client is bounded by
\begin{equation}
    \begin{split}
        \mathbb{E} \left(  \|\mathbf{g}^{n}_{t} -  \nabla F (\mathbf{w}^{n}_{t} )\|^2 \right) &  \leq \sigma^2. 
    \end{split}
\end{equation}

Note that in both the SFL-GA framework and the traditional SFL framework, the server-side model is updated using the same smashed data. The divergence between the two frameworks arises only in the updating of the client-side model: SFL-GA employs aggregated gradients of the smashed data, while the traditional SFL framework uses the gradients from each client's individual smashed data. This divergence influences the convergence behavior, and accurately characterizing this discrepancy is challenging. Nevertheless, we can observe that this discrepancy is significantly related to the size of the client-side model. Specifically, as the size of the client-side model increases, the differences between the client-side models in the SFL-GA framework and the traditional SFL framework become more pronounced, thereby exerting a greater impact on the convergence of both frameworks. Considering the convergence discrepancy related to the size of the client-side model, we introduce the following assumption.

\textbf{Assumption 4. (Bounding the Difference Between SFL Gradient and SFL-AG Gradient).} 
When the client-side model holds a size of $\phi_t\left( v\right)$ in $t$-th round, the expected gradient variance between SFL and SFLAG is bounded by 
\begin{equation}
     \begin{split}
         \mathbb{E} \left(  \left\|\mathbf{h}^{n}_{t} -  \nabla F(\mathbf{w}^{n}_{t}) \right\|^2 \right) & = \mathbb{E} \left(  \left\|   \nabla_{\mathbf{w}^{s}} F\left( \mathbf{w}^{n}_{t}\right) -  \nabla_{\mathbf{w}^{s}} F\left( \mathbf{w}^{n}_{t}\right) \right\|^2 \right. \\
		& \left. ~~~ +  \left\| \tilde{\nabla_{\mathbf{w}^{s}} F\left( \mathbf{w}^{}_{t}\right)} -  \nabla_{\mathbf{w}^{c}} F\left( \mathbf{w}^{n}_{t}\right) \right\|^2 \right)\\
         &= \mathbb{E} \left\| \tilde{\nabla_{\mathbf{w}^{c}} F\left( \mathbf{w}^{}_{t}\right)} -  \nabla_{\mathbf{w}^{c}} F\left( \mathbf{w}^{n}_{t}\right) \right\|^2 \\
         &\leq \Gamma\left(\phi^{}_{t}\left(v\right)\right),
     \end{split}
\end{equation}
where $\Gamma\left( \cdot \right)$ is a monotone non-decreasing function with respect to the client-side model size $\phi^{}_{t}\left(v\right)$.
Assumption 4 indicates that the gradient difference between vanilla SFL and SFL-GA is related to the client-side model size $\phi_t\left( v\right)$ and is bounded. Specifically, a smaller client-side model size $\phi_t\left( v\right)$ results in a smaller gradient difference.

With above assumptions, we introduce Lemma \ref{Lemma1} to demonstrate the upper bound of the improvement of the global loss function in each round. 

\begin{lemma} \label{Lemma1}
    When  the learning rate $\eta$ satisfies $0 \leq 2 L^2\eta^2 \tau \left( \tau -1 \right)  \leq \frac{1}{5}$ in the $t$-th communication round, the improvement of the global loss function is bounded by
    \begin{equation}
    \begin{split}
        \mathbb{E} \left( F\left(\mathbf{w}_{t+1}\right) - F\left(\mathbf{w}_{t}\right)\right) \leq 
        - \frac{\eta\tau}{4} \left\| \nabla F(\mathbf{w}^{}_{t}) \right\|^2 + \eta\tau \Gamma\left(\phi^{}_{t}\left(v\right)\right)\\
        + L\eta^2 \tau \sigma^2 \sum_{n=1}^{N}\left(\rho^{n}\right)^2  + \frac{5 L^2\eta^3 \sigma^2 \tau_{} \left( \tau_{} -1\right)}{4}. 
    \end{split}
\end{equation}

\end{lemma}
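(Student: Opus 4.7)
The plan is to begin from the $L$-smoothness assumption applied at $\mathbf{w}_{t+1}$ and $\mathbf{w}_t$, which yields
\begin{equation*}
F(\mathbf{w}_{t+1}) - F(\mathbf{w}_t) \leq \langle \nabla F(\mathbf{w}_t),\, \mathbf{w}_{t+1} - \mathbf{w}_t \rangle + \tfrac{L}{2}\|\mathbf{w}_{t+1} - \mathbf{w}_t\|^2.
\end{equation*}
From the per-client update in Eq.~(6), together with the server-side aggregation in Eq.~(7) and the fact that in SFL-GA the client-side parameters remain identical across all clients (so no client-side aggregation is needed), the aggregated increment can be written as $\mathbf{w}_{t+1} - \mathbf{w}_t = -\eta \sum_{i=1}^{\tau}\sum_{n=1}^{N} \rho^n \mathbf{g}^n_{t,i}$, where $\mathbf{g}^n_{t,i}$ denotes the concatenated stochastic gradient at the $i$-th local step. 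Taking expectation and invoking Assumption~2, each $\mathbb{E}[\mathbf{g}^n_{t,i}]$ is replaced by $\mathbf{h}^n_{t,i}$ (whose client-side entry carries the SFL-GA-induced bias introduced by the aggregated smashed-data gradients), while the variance of the second-moment term contributes the $L\eta^2 \tau \sigma^2 \sum_n (\rho^n)^2$ piece through the bounded-variance bound.

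Next, I would use the polarization identity $-2\langle a,b\rangle = \|a-b\|^2 - \|a\|^2 - \|b\|^2$ on the inner-product term to extract the desired descent contribution $-\eta\tau\|\nabla F(\mathbf{w}_t)\|^2$ together with a residual of the form $\sum_i \mathbb{E}\|\sum_n \rho^n \mathbf{h}^n_{t,i} - \nabla F(\mathbf{w}_t)\|^2$. This residual is the heart of the proof and I would split it, via Young's inequality, into a \emph{client-drift} part $\sum_n \rho^n\|\mathbf{h}^n_{t,i} - \nabla F(\mathbf{w}^n_t)\|^2$ and an \emph{SFL-GA discrepancy} part $\sum_n \rho^n\|\mathbf{h}^n_t - \nabla F(\mathbf{w}^n_t)\|^2$. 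The latter is immediately bounded by $\Gamma(\phi_t(v))$ through Assumption~4, producing the $\eta\tau \Gamma(\phi_t(v))$ term. The former I would bound by $L^2\,\mathbb{E}\|\mathbf{w}^n_{t,i} - \mathbf{w}_t\|^2$ via $L$-smoothness, and then control $\mathbb{E}\|\mathbf{w}^n_{t,i} - \mathbf{w}_t\|^2$ by a standard recursion over local epochs, which in expectation scales like $\eta^2 i(i-1)\sigma^2 + \eta^2 i^2 \|\nabla F(\mathbf{w}_t)\|^2$; summing over $i=1,\dots,\tau$ gives the $L^2\eta^3\sigma^2\tau(\tau-1)$ contribution with the constant $5/4$ emerging after combining cross-terms.

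Finally, the hypothesis $2L^2\eta^2\tau(\tau-1) \leq 1/5$ is exactly what is required to absorb the residual $\|\nabla F(\mathbf{w}_t)\|^2$ coefficient produced by the drift bound into the leading descent term, turning the $-\eta\tau$ coefficient into the stated $-\eta\tau/4$; assembling the four contributions yields the claimed inequality. The main obstacle I anticipate is the recursive client-drift analysis over $\tau$ local epochs: one must carefully separate the SFL-GA bias (captured only once per round through $\Gamma(\phi_t(v))$) from the per-step stochastic variance, because the biased client-side direction is reused within each of the $\tau$ local updates, producing a coupled recursion whose tight bound is obtainable only under the prescribed learning-rate condition. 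Everything else (the $L$-smooth inner product split, variance accumulation, and weighted-average decomposition) reduces to routine but tedious bookkeeping.
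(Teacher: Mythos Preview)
Your proposal is correct and follows essentially the same route as the paper: $L$-smooth descent, polarization to extract $-\tfrac{\eta\tau}{2}\|\nabla F(\mathbf{w}_t)\|^2$, the variance piece $L\eta^2\tau\sigma^2\sum_n(\rho^n)^2$, a residual split into the SFL-GA bias (bounded by $\Gamma(\phi_t(v))$ via Assumption~4) and client drift (bounded by $L^2\mathbb{E}\|\mathbf{w}^n_{t,i}-\mathbf{w}_t\|^2$ and a local-epoch recursion), and finally absorption of the extra $\|\nabla F(\mathbf{w}_t)\|^2$ through $2L^2\eta^2\tau(\tau-1)\leq 1/5$. One small labeling fix: the term to which $L$-smoothness applies must be $\|\nabla F(\mathbf{w}^n_{t,i}) - \nabla F(\mathbf{w}_t)\|^2$ (true full-batch gradients), with the intermediate $\nabla F(\mathbf{w}^n_{t,i})$ inserted so that Assumption~4 handles $\|\nabla F(\mathbf{w}^n_{t,i}) - \mathbf{h}^n_{t,i}\|^2$; the paper does exactly this, and your plan clearly intends it despite writing $\mathbf{h}^n_{t,i}$ in the drift expression.
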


\begin{proof}
    Please refer to Appendix \ref{AppenB}.
\end{proof}

Based on Lemma \ref{Lemma1}, we further introduce the following Theorem to show the upper bound
of the average squared gradient norm, which illustrates the convergence performance.
\begin{theorem}
    Under the condition of $0 \leq 2 L^2\eta^2 \tau \left( \tau -1 \right)  \leq \frac{1}{5}$,  the average squared gradient norm after T communication rounds is bounded by
\begin{equation}\label{ASGN}
    \begin{split}
    \frac{1}{T} \sum_{t=1}^{T}\left\| \nabla F(\mathbf{w}^{}_{t}) \right\|^2 \leq \underbrace{\frac{4\left( F\left(\mathbf{w}_{T}\right) - F^*\right)}{\eta\tau T}}_{\text{Effects of initialization}} + \underbrace{ \frac{4 }{T} \sum_{t=1}^{T} \Gamma\left(\phi^{}_{t}\left(v\right)\right)}_{\text{Effects of cutting point}} \\
    + \underbrace{4 L\eta \sigma^2 \sum_{n=1}^{N}\left(\rho^{n}\right)^2 + 5L^2\eta^2 \sigma^2\left( \tau_{} -1\right)}_{\text{ Effects of gradient variance} }. 
    \end{split}
\end{equation}
\end{theorem}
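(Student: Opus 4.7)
The proof will essentially be a telescoping-sum argument built directly on top of Lemma~\ref{Lemma1}, so the plan is to view the theorem as the natural descent-lemma-to-convergence-rate conversion that is standard in non-convex stochastic optimization. The hypothesis $0 \leq 2 L^2\eta^2 \tau(\tau-1) \leq \frac{1}{5}$ is the same one that validates Lemma~\ref{Lemma1}, so I may apply that lemma in every round $t \in \{1,\ldots,T\}$ without further checking.

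First, I would rearrange the bound of Lemma~\ref{Lemma1} to isolate the squared gradient norm, giving
\begin{equation*}
\frac{\eta\tau}{4}\left\|\nabla F(\mathbf{w}_{t})\right\|^2 \leq \mathbb{E}\bigl(F(\mathbf{w}_{t}) - F(\mathbf{w}_{t+1})\bigr) + \eta\tau\,\Gamma\bigl(\phi_{t}(v)\bigr) + L\eta^{2}\tau\sigma^{2}\sum_{n=1}^{N}(\rho^{n})^{2} + \frac{5L^{2}\eta^{3}\sigma^{2}\tau(\tau-1)}{4}.
\end{equation*}
Then I would sum this inequality over $t=1,\ldots,T$ and take total expectation. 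The crucial observation is that the first term on the right telescopes: $\sum_{t=1}^{T}\mathbb{E}(F(\mathbf{w}_{t})-F(\mathbf{w}_{t+1})) = F(\mathbf{w}_{1}) - \mathbb{E}(F(\mathbf{w}_{T+1}))$. Invoking the global lower bound $F^{*} \leq F(\mathbf{w}_{T+1})$ upper-bounds this by $F(\mathbf{w}_{1}) - F^{*}$ (this is what plays the role of the ``Effects of initialization'' term in the stated theorem; the $F(\mathbf{w}_{T})$ symbol in \eqref{ASGN} should be read as the initial iterate).

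Next, dividing both sides by $\frac{\eta\tau T}{4}$ yields
\begin{equation*}
\frac{1}{T}\sum_{t=1}^{T}\left\|\nabla F(\mathbf{w}_{t})\right\|^2 \leq \frac{4\bigl(F(\mathbf{w}_{1}) - F^{*}\bigr)}{\eta\tau T} + \frac{4}{T}\sum_{t=1}^{T}\Gamma\bigl(\phi_{t}(v)\bigr) + 4L\eta\sigma^{2}\sum_{n=1}^{N}(\rho^{n})^{2} + 5L^{2}\eta^{2}\sigma^{2}(\tau-1),
\end{equation*}
which is exactly the claimed inequality once the three groups on the right are identified with the ``initialization'', ``cutting point'', and ``gradient variance'' terms highlighted in \eqref{ASGN}.

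In this plan there is no single ``hard part'' — the non-trivial mathematical work has already been absorbed into Lemma~\ref{Lemma1} and Assumption~4. The only subtlety worth flagging is the identification of the telescoping endpoint: care must be taken that the constant multiplying $\Gamma$ and the two $\sigma^{2}$-variance terms each pick up exactly the right factor from the division by $\frac{\eta\tau T}{4}$, so that $L\eta^{2}\tau$ becomes $4L\eta$ and $\frac{5L^{2}\eta^{3}\tau(\tau-1)}{4}$ becomes $5L^{2}\eta^{2}(\tau-1)$. The $\Gamma$ term remains round-dependent because $\phi_{t}(v)$ may vary with $t$ under dynamic cutting-point selection, which motivates the average-of-$\Gamma$ form kept on the right-hand side rather than collapsing it into a single constant.
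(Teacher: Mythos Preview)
Your proposal is correct and follows exactly the approach the paper intends: the theorem is derived directly from Lemma~\ref{Lemma1} by the standard rearrange--sum--telescope--divide argument, and the paper does not supply any additional ingredients beyond what you describe. Your remark about the telescoping endpoint (the ``$F(\mathbf{w}_{T})$'' in \eqref{ASGN} really playing the role of the initial iterate) is also a fair reading of the statement.
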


\begin{remark}
    It is observed that the \eqref{ASGN} is influenced by the initialization, gradient variance, as well as the cutting point. In particular, reducing the impact of the cutting point during the training process can lower the bound of \eqref{ASGN}, suggesting that a smaller client-side model size enhances convergence performance. At the same time, the cutting point also affects communication overhead, computational burden, and privacy leakage. 
    This observation motivates the co-design of CCC for achieving a communication-and-computation efficient SFL.
\end{remark}

\subsection{Computational Complexity and Scalability}
To facilitate the analysis of computational complexity, we let $\rho^{n}_{} = \frac{1}{N}$. Therefore, \eqref{ASGN} can be reformulated as
\begin{equation}
    \begin{split}
    \frac{1}{T} \sum_{t=1}^{T}\left\| \nabla F(\mathbf{w}^{}_{t}) \right\|^2 \leq \frac{4\left( F\left(\mathbf{w}_{T}\right) - F^*\right)}{\eta\tau T} + \frac{4 L\eta \sigma^2}{N} \\
    + 5L^2\eta^2 \sigma^2\left( \tau_{} -1\right)+  \frac{4 }{T} \sum_{t=1}^{T} \Gamma\left(\phi^{}_{t}\left(v\right)\right). 
    \end{split}
\end{equation}

If the learning rate satisfies $\eta = \sqrt{\frac{N}{\tau T}}$ \cite{LiangTWC2024}, the computational complexity of SFL-GA is given by
\begin{equation}\label{complexitySGLGA}
\begin{split}
    & \mathbb{E}\left(\frac{1}{T} \sum_{t=1}^{T}\left\| \nabla F(\mathbf{w}^{}_{t}) \right\|^2 \right) \\
    & \leq \mathcal{O}\left(\frac{ 1}{\sqrt{\tau N T}}  +  \frac{\sigma^2}{\sqrt{\tau NT}} + \frac{N\left( \tau_{} -1\right)\sigma^2}{\tau T} + M \right),
\end{split}
\end{equation}
where $ M = \frac{4 }{T} \sum_{t=1}^{T} \Gamma\left(\phi^{}_{t}\left(v\right)\right) $. If $M$ is bounded, the computational complexities of proposed SGL-GA is given by $\mathcal{O} \left( \frac{1 }{\sqrt{\tau N T}}  \right) + \mathcal{O} \left(  \frac{\sigma^2 N  \left( \tau_{} -1\right) }{\tau_{} T } \right)$. % which is consistent with previous results \cite{}

To evaluate the scalability of our proposed SFL-GA, we focus on the convergence behavior in \eqref{complexitySGLGA} with respect to the number of clients $N$. 
From \eqref{complexitySGLGA}, the first two terms decrease with the increment of $N$. It indicates that the convergence rate benefits from a larger number of clients through improving the average updates performance. 
On the other aspect, the third term of \eqref{complexitySGLGA} increases linearly with $N$, which suggests a potential risk in computational complexity and variance. Consequently, the convergence behavior initially improves but eventually deteriorates as $N$ continues to grow. 

\section{Problem Formulation and Resource Optimization}
In this section, we formulate a convergence rate and latency optimization problem based on the system models and convergence results. Subsequently, a joint CCC strategy that solved by the DDQN algorithm and convex optimization  is designed.

\subsection{Problem Formulation}
According to the latency analysis in Section \ref{SysMod}, the gradients of smashed data are aggregated before broadcasting to all clients. Therefore, the total latency for communication round $t$ is derived by
\begin{equation}
    l_t = \max_{n} \{ l^{n,U}_{t} + l^{n,F}_{t} + l^{n,s}_{t}  \} + \max_{n} \{ l^{n,D}_{t} +  l^{n,B}_{t} \}.
\end{equation}

Apparently, the wireless channel conditions and heterogeneous computing capabilities of clients may significantly affect the latency of SFL, while the cutting point influences the convergence rate. In this work, we aim to realize communication-and-computation efficient SFL. To this end, we formulate a convergence rate and latency optimization problem while considering resource and privacy constraints.

Let $\boldsymbol{f}^{s} = \left[f^{1, s}_{1}, ..., f^{N, s}_{T} \right]$, $\boldsymbol{f}^{c} = \left[f^{1, c}_{1}, ..., f^{N, c}_{T} \right]$, $\boldsymbol{p} = \left[p^{1}_{1}, ..., p^{N}_{T} \right]$ and $\boldsymbol{B} = \left[B^{1}_{1}, ..., B^{N}_{T} \right]$. The communication-and-computation efficient problem can be formulated as
\begin{align}
\mathcal{P} 1: \mathop {\min }\limits_{ \left\{ v, \boldsymbol{f}^{s}, \boldsymbol{f}^{c}, \boldsymbol{p}, \boldsymbol{B}\right\}}~ & \sum_{t=1}^{T} \left( w \Gamma\left(\phi^{}_{t}\left(v\right)\right) +  l_t \left( v\right)\right),   \label{1} \\ %\sum_{n=1}^{N}  \rho^{n} \delta^{n}_{t}\left( v\right)
\mathrm{s.t.} \qquad & v \in \mathcal{V}, \tag{\ref{1}{a}} \label{1a}\\ %\in \mathcal{T}
& 0 \leq f^{n}_{t} \leq f^{n,c}_{\max},  \ \forall n,t, \tag{\ref{1}{b}} \label{1b}\\
& 0 \leq p^{n}_{t} \leq  p^{n}_{\max},  \forall n,t, \tag{\ref{1}{c}} \label{1c}\\
& \sum_{n=1}^{N} f^{n,s}_{t} \leq f^{s}_{\max}, \  \forall t,  \tag{\ref{1}{d}} \label{1d}\\
& \log(1 + \frac{\phi^{}_{t}\left(v\right)}{q}) \geq \epsilon,  \forall t, \tag{\ref{1}{e}} \label{1e}\\
& \sum^{N}_{n=1} B^{n}_{t} \leq B, \ \forall t ,\tag{\ref{1}{f}} \label{1f}
\end{align}
where $ w$ is a weighted factor to balance the convergence rate and latency. \eqref{1a} is the layer constraint of ML model, $f^{n,c}_{\max}$ and $p^{s}_{\max}$ in \eqref{1b} and \eqref{1c} are the the maximum computation resource and transmit power of each client, respectively, $f^{n,s}_{\max}$ in \eqref{1d} denotes the maximum computation resource constraint for model update at the server. \eqref{1e} is the constraint for privacy protection, \eqref{1f} ensures that the total bandwidth for all clients doesn't exceed the available bandwidth $B$.

$\mathcal{P}1$ is a min-max MINLP problem, which is coupling of cutting point selection and resource allocation. The major difficulty of solving $\mathcal{P}1$ lies in the efficient determination of the cutting point selection under dynamic fading channels and heterogeneous capabilities of clients. Existing optimization algorithms require iteratively adjusting or exhaustively enumerate the cutting point variables, which is inefficient. Therefore, a joint CCC strategy is developed based on DDQN algorithm.

\subsection{Joint CCC  Strategy}  

To address the min-max issue of $\mathcal{P}1$, we introduce auxiliary variables $\chi_t$ and $\psi_t$. Then, $\mathcal{P}1$ can be equivalently reformulated as
\begin{align}
\mathcal{P} 2: \mathop {\min }\limits_{ \left\{ \boldsymbol{v}, \boldsymbol{f}^{s}, \boldsymbol{f}^{c}, \boldsymbol{p}, \boldsymbol{B}\right\}}~ & \sum_{t=1}^{T} \left( w \Gamma\left(\phi^{}_{t}\left(v\right)\right) + \chi_t +  \psi_t\right),   \label{2} \\ 
\mathrm{s.t.} \qquad & \eqref{1a}, \eqref{1b}, \eqref{1c}, \eqref{1d}, \eqref{1e}, \eqref{1f} \tag{\ref{2}{a}}, \label{2a}\\ 
&l^{n,U}_{t} + l^{n,F}_{t} + l^{n,s}_{t}  \leq \chi_t  \tag{\ref{2}{b}}, \label{2b}\\
&l^{n,D}_{t} +  l^{n,B}_{t} \leq \psi_t \ \forall t . \tag{\ref{2}{c}} \label{2c}
\end{align}
Nonetheless, $\mathcal{P} 2$ remains an NP-hard problem. Intuitively, it can be divided into two subproblems: resource allocation and cutting point selection. Therefore, we can address $\mathcal{P} 2$ by jointly solving these subproblems. 
In what follows, we present the optimization methods for the resource allocation and cutting point selection subproblems, respectively.

\subsubsection{Resource Allocation} 
Given the optimal cutting point selection variables $\boldsymbol{v}^{*}$, the resource allocation subproblem is independent to the communication rounds.  Therefore, it can be decomposed into $T$ separate subproblems, each addressed independently.  Without loss of generality, the resource allocation subproblem for communication roud $t$ is formulated as
\begin{equation}\label{2.1}
    \mathcal{P} 2.1: \mathop {\min }\limits_{ \left\{  \boldsymbol{f}^{s}, \boldsymbol{f}^{c}, \boldsymbol{p}, \boldsymbol{B}\right\}} \chi_t +  \psi_t,   
\end{equation}
subject to \eqref{1b}, \eqref{1c}, \eqref{1d},  \eqref{1f}, \eqref{2b}, \eqref{2c}.

It can be easily shown that $\mathcal{P} 2.1$ is a convex optimization problem. Therefore, it can be resolved by existing optimization technique (e.g., CVX).

\subsubsection{Cutting Point Selection}
Given the optimal resource allocation $\boldsymbol{f}^{*}$, $\boldsymbol{p}^{*}$ and $\boldsymbol{B}^{*}$. The cutting point selection subproblem can be expressed as
\begin{equation}
    \mathcal{P} 2.2: \mathop {\min }\limits_{ \left\{ \boldsymbol{v}\right\}} \sum_{t=1}^{T} \left( w \Gamma\left(\phi^{}_{t}\left(v\right)\right) + \chi_t +  \psi_t\right), 
\end{equation}
subject to \eqref{1a}, \eqref{1e},\eqref{2b}, \eqref{2c}.

Due to the integer variables, $\mathcal{P} 2.2$ is an integer programming.
The DDQN algorithm has been identified as an effective method to tackle such decision issues with integer variables \cite{ZhangJSAC2022, PeiJSAC2020, HuangTMC2020}. Therefore, we employ the DDQN algorithm to deal with $\mathcal{P} 2.2$ in this work. 
Before using the DDQN algorithm, the subproblem $\mathcal{P} 2.2$ needs to be transformed into a Markov Decision Process (MDP) problem with a tuple $\left\langle \mathcal{S}, \mathcal{A},  \mathcal{P}, \mathcal{R} \right\rangle$, where $\mathcal{S}$,   $\mathcal{A}$,  $\mathcal{P}$ and $\mathcal{R}$ are the state space, action space, state transition probability, and reward, respectively. Specifically, the corresponding elements in the tuple are presented as follows.
\begin{itemize}
    \item  State space $\mathcal{S}$. It is observed the channel gain at the beginning of communication round $t$ Therefore, we define the state space in communication round $t$ as 
    \begin{equation}
        \mathbf{s}_t = \{ h^{n,c}_t, \sum_{i=1}^{t-1} \left( \Gamma\left(\phi^{}_{i}\left(v\right)\right) + \chi_i +  \psi_i\right)\}_{\forall n}.
    \end{equation}
    
    \item Action space $\mathcal{A}$. Since the cutting point $v$ is selected from $\mathcal{V}$ that composed of $V-1$ layers in each communication round, we define the state space in communication round $t$ as $\mathbf{a}_t = \{1, 2, ..., V-1 \}$.

    \item State transition probability $\mathcal{P}$. Let $\mathcal{P} \left(s_{t-1}|s_{t}, a_{t}\right)$ be the probability of transitioning from state $s_{t-1}$ to state $s_{t}$ under action $a_{t}$.

    \item Reward $\mathcal{R}$. Reward $r_t$ is designed to evaluate the quality of a learning policy under state-action pair $\left( \mathbf{s}_t, \mathbf{a}_t \right)$, which is defined as
    \begin{equation}\label{reward}
        r_t\left( \mathbf{s}_t, \mathbf{a}_t \right) = \begin{cases}
     \Gamma\left(\phi^{}_{t}\left(v\right)\right) + \chi_t +  \psi_t, &    log(1+\frac{\phi_t\left(v\right)}{q}) \geq \epsilon, \\
    C, & \mathrm{otherwise.}
  \end{cases}
    \end{equation}
where $C$ is a sufficiently large value used as a penalty.
\end{itemize}

Based on the tuple above, we further define the cumulative discounted reward for $t$-th communication round as
\begin{equation}
    U_t = \lim_{T\rightarrow{+\infty}}\sum_{i=t}^{T} \gamma^{i-t} r_{i}\left(\mathbf{s}_{i}, \boldsymbol{a}_{i} \right),
\end{equation}
where $\gamma \in \left( 0, 1\right]$ is the discount factor for weighting future rewards. Then the MDP problem is formulated, aiming to find an optimal policy $\pi^*$ that maximizes the expected long-term discounted rewards, i.e.,
\begin{equation}
    \pi^* = \arg \max_{\pi} \mathbb{E}_{\pi}\left[U_t\right].
\end{equation}

To tackle the MDP problem, the DDQN algorithm defines an agent that interacts with the environment to choose better actions based on a certain policy $\pi$ for maximizing long-term discounted rewards. To this end, the DDQN introduces a state-action function $Q^{\pi}\left( \mathbf{s}_t, \mathbf{a}_t; \boldsymbol{\theta} \right)$  for a certain policy $\pi$ as the expected future long-term reward for a state-action pair $\left( \mathbf{s}_t, \mathbf{a}_t \right)$, which is presented by
\begin{equation}
	Q^{\pi}\left( \mathbf{s}_{t}, \boldsymbol{a}_{t}; \boldsymbol{\theta}\right)  = \mathbb{E}_{\pi}\left[U_t|\mathbf{s}_{t}, \boldsymbol{a}_{t}\right],
\end{equation}
where $\boldsymbol{\theta} $ is the parameter vector of the Q-network.

To find the optimal policy $\pi^*$, it is equivalent to obtaining the optimal action-value function $Q^{*}\left( \mathbf{s}_{t}, \boldsymbol{a}_{t}; \boldsymbol{\theta}\right)$, which can be achieved through the Bellman equation as 
\begin{equation}\label{bellman}
    Q^{^*}\left( \mathbf{s}_{t}, \boldsymbol{a}_{t}; \boldsymbol{\theta}\right) = r_t + \gamma \max_{\boldsymbol{a}_{t+1}} Q^{^*}\left( \mathbf{s}_{t+1}, \boldsymbol{a}_{t+1}; \boldsymbol{\theta}\right).
\end{equation}

The optimal action-value function $Q^{^*}$ can be obtained by optimizing the parameter vector $\boldsymbol{\theta}$ of the Q-network. To this end, the DDQN algorithm optimizes the parameter $\boldsymbol{\theta}$ by minimizing the following loss function 
\begin{equation} \label{DDQNloss}
\begin{split}
	\mathcal{L }(\boldsymbol{\theta}) & =  \left(r_t+\gamma \max _{\boldsymbol{a}_{t+1}} Q\left(\mathbf{s}_{t+1}, \arg \max _{\boldsymbol{a}_{t+1}} Q\left(\mathbf{s}_{t+1}, \mathbf{a}_{t+1} ; \boldsymbol{\theta} \right) ; \hat{\boldsymbol{\theta}} \right)\right. \\
		& \left. -Q(\mathbf{s}_{t}, \mathbf{a}_{t}; \boldsymbol{\theta})\right)^2.
\end{split}
\end{equation}

Then, a gradient descent method is employed to minimize the loss function  $\mathcal{L }(\boldsymbol{\theta})$. As a result, the optimal policy is achieved by obtaining the optimal parameter vector $\boldsymbol{\theta}^* $.

Following the proposed optimization methods above, we now introduce a joint cutting point control and resource allocation strategy for $\mathcal{P} 1$. Specifically, we employ the DDQN to optimize the cutting point selection subproblem by reformulating $\mathcal{P} 2.2$ as a MDP.
During each exploration in the DDQN algorithm, the agent take an action to acquire rewards as defined in equation \eqref{reward} where $\chi_t$ and $\psi_t$ are obtained through resolving $\mathcal{P} 2.1$ by convex optimization technique.

The detailed procedure is shown in Algorithm \ref{ProAlg1}.

\begin{algorithm}\label{ProAlg1}
	\SetAlgoLined
	\KwIn{Initialize parameter vector of Q-networks $\boldsymbol{\theta}^{1}$; Initialize the experience buffer; Maximum episode number $L_{\max}$.}
	\For{episode $\ell =1 $ \KwTo\  $L_{\max}$  }{
		
		Reset the initial state $\mathbf{s}_{1}$;%\sim 
   
		\For{communication round $t =1 $ \KwTo\ $T$ }{
                 
			DQN agent selects discrete action $\boldsymbol{a}_{t}$ based on the observed state $\mathbf{s}_{t}$;
			
			Obtain the optimal $f^{n, c*}_t$, $f^{n, s*}_t$, $p^{n*}_t$ and $B^{n*}_t$ by resolving $\mathcal{P} 2.1$;
					
			Calculate the reward $r_{t}$ with $f^{n, c*}_t$, $f^{n, s*}_t$, $p^{n*}_t$ and $B^{n*}_t$; 
			
			Observe the next $\mathbf{s}_{t+1}$;
					
			Add transition $(\mathbf{s}_{t}, \boldsymbol{a}_{t}, r_{t}, \mathbf{s}_{t+1})$ to the replay buffer;
			
			Sample a minibatch from the replay buffer;
				
		    Update DQN network by the gradient descent method: $\boldsymbol{\theta}^{t+1} \leftarrow \boldsymbol{\theta}^{t}$;
				
		}
% 		\Until{($f(x_k)>f(x_{k-1})$)}
	}	
	\caption{The joint CCC  strategy for $\mathcal{P} 1$.}
\end{algorithm}

\begin{figure*}
	\centering
	\subfigure[MINIST]{
		\begin{minipage}[t]{0.32\linewidth}
			\centering
			\includegraphics[width=2.2in]{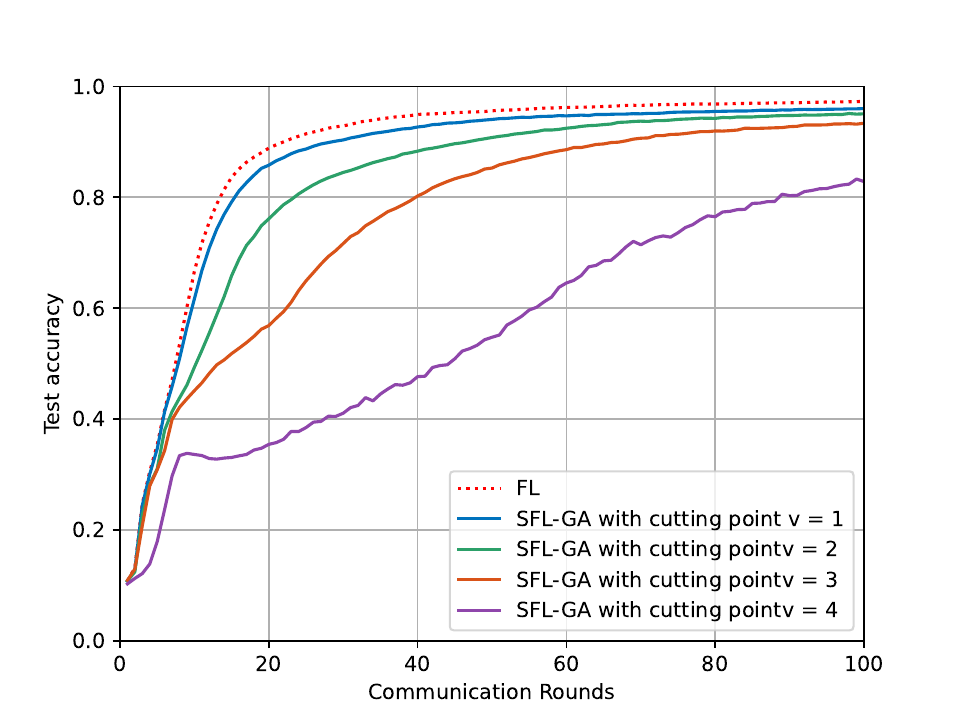}
		\end{minipage}
	}%
	\subfigure[FMINIST]{
		\begin{minipage}[t]{0.32\linewidth}
			\centering
			\includegraphics[width=2.2in]{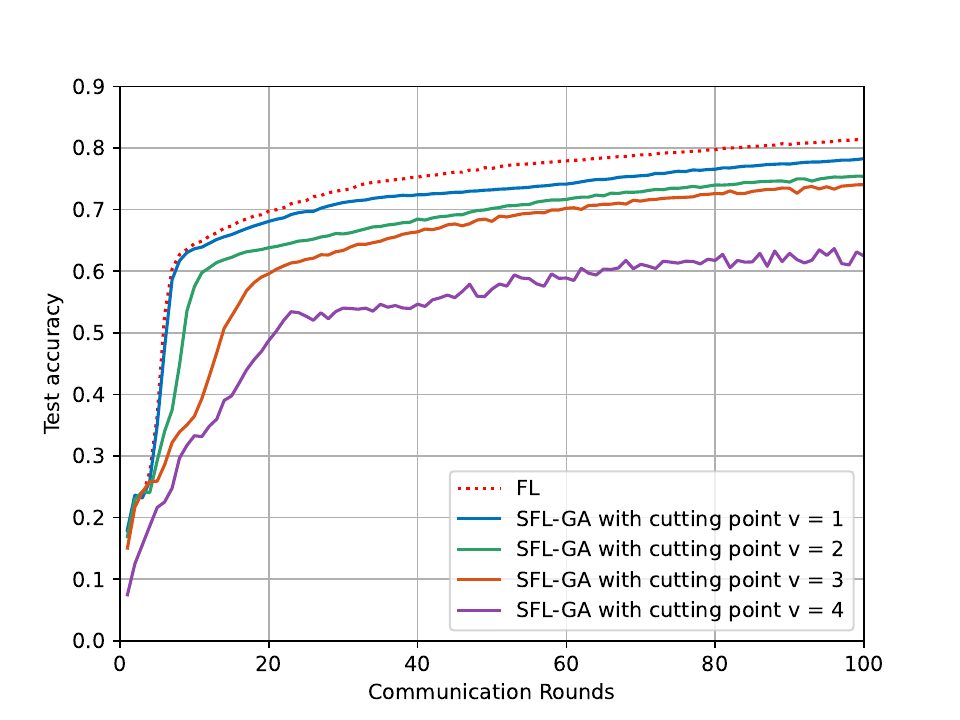}
		\end{minipage}
	}%
	%此处的空行很重要，想让图片在什么地方换行就在代码对应位置空行
	\subfigure[CIFAR-10]{
		\begin{minipage}[t]{0.32\linewidth}
			\centering
			\includegraphics[width=2.2in]{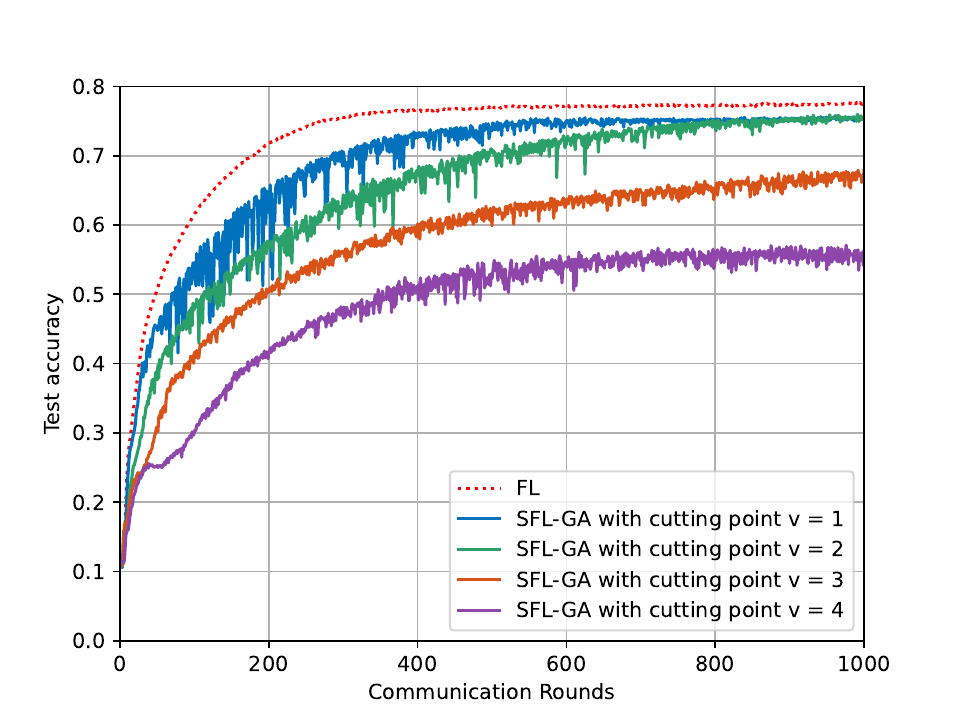}
		\end{minipage}
	}%
	\centering
	\caption{Convergence performance evaluation over different cutting layer.}
	\label{ConvergenceCutting}
\end{figure*}

\begin{figure*}
	\centering
	\subfigure[MNIST]{
		\begin{minipage}[t]{0.32\linewidth}
			\centering
			\includegraphics[width=2.2in]{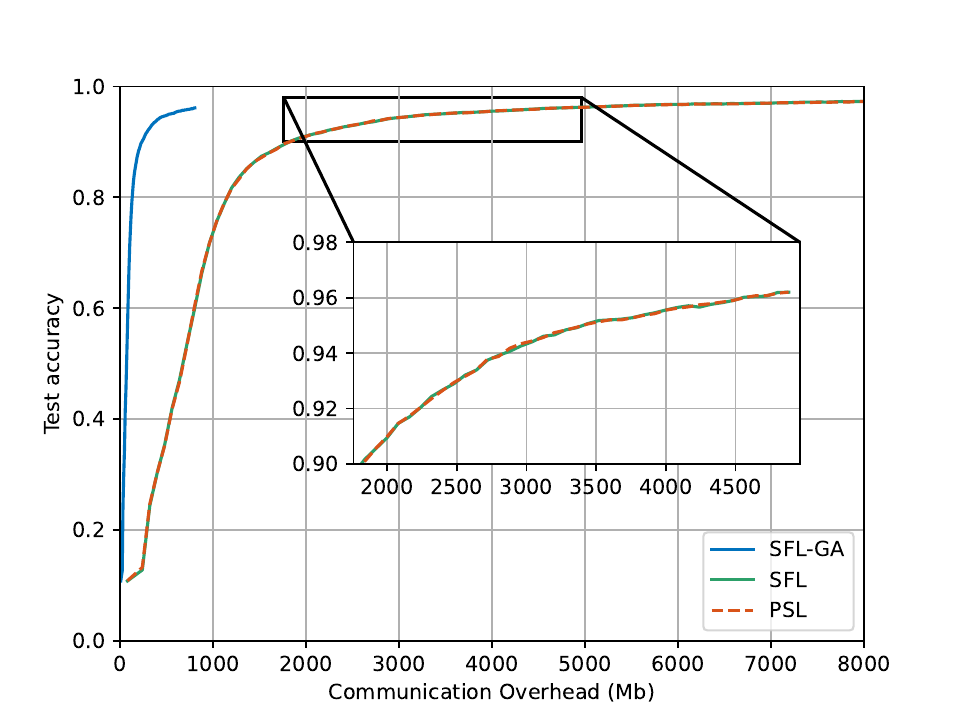}
		\end{minipage}
	}%
	\subfigure[FMNIST]{
		\begin{minipage}[t]{0.32\linewidth}
			\centering
			\includegraphics[width=2.2in]{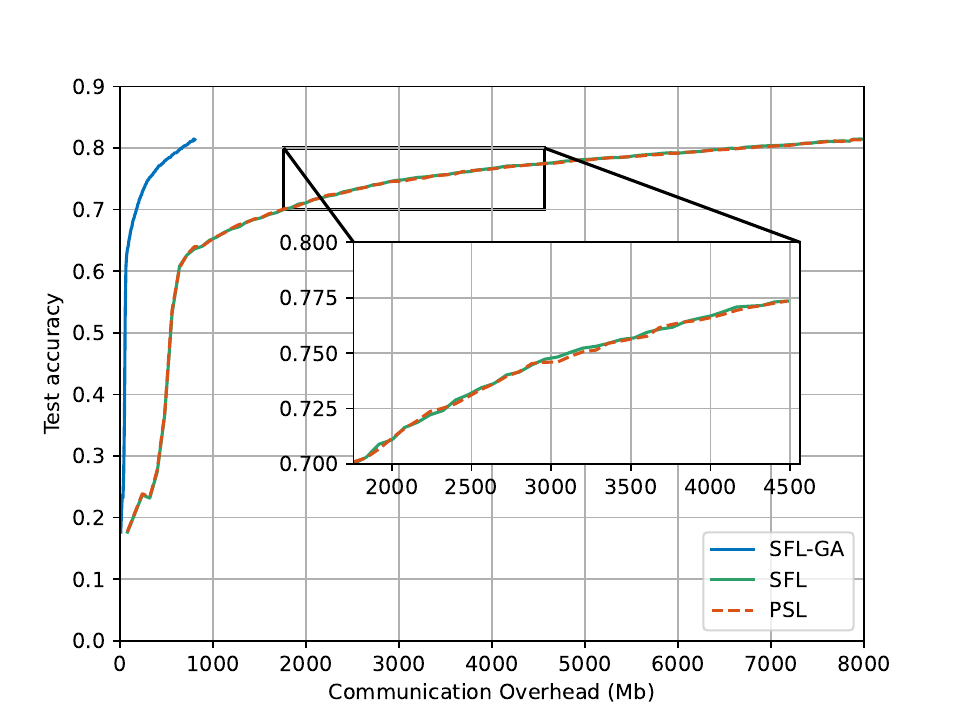}
		\end{minipage}
	}%
	%此处的空行很重要，想让图片在什么地方换行就在代码对应位置空行
	\subfigure[CIFAR-10]{
		\begin{minipage}[t]{0.32\linewidth}
			\centering
			\includegraphics[width=2.25in]{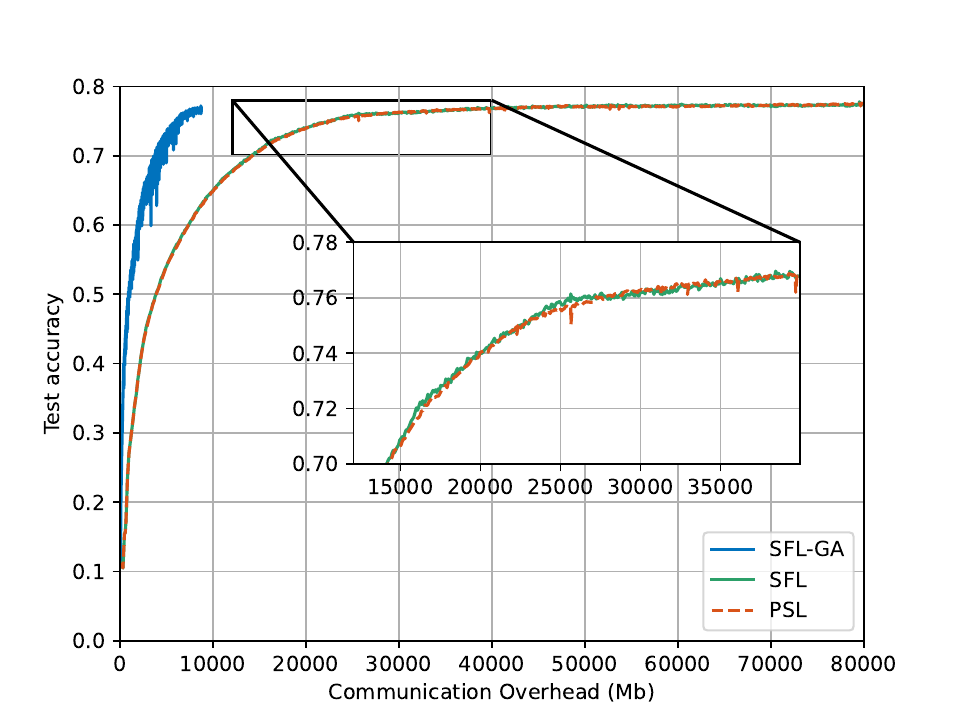}
		\end{minipage}
	}%
	\centering
	\caption{Communication overheads over different schemes.}
	\label{Communication}
\end{figure*}

\subsection{Complexity Analysis of Algorithm 1}

As described previously, the proposed Algorithm \ref{ProAlg1} integrates convex optimization method and DDQN algorithm, where the agent needs to resolve $\mathcal{P} 2.1$ before obtaining a reward. Therefore, we first analyses the complexity of solving $\mathcal{P} 2.1$. Then, the complexity of algorithm 1 is further presented.
Note that $\mathcal{P} 2.1$ is a convex optimization problem, which can be resolved with a polynomial complexity, e.g., $\mathcal{O}(N^{3.5})$ \cite{BoydCVX}.
The DDQN network is represented by an fully-connected NN (FCNN) in this work. Generally, the computational complexity of an FCNN is $ \mathcal{O}\left(\sum_k\left(2 I_k-1\right) I_{k+1}\right)$ \cite{ZhangJSAC2021}, where $k \in [0, K]$ denotes the layer index and $I_k$ represents the neuron number of hidden layers.
Let $M$ be the total number of episodes and $T$ be the number of steps per episode. Then, the overall computational complexity of Algorithm 1 is $\mathcal{O}\left( TM \left(\sum_k\left(2 I_k-1\right) I_{k+1} + N^{3.5} \right) \right)$.

\section{Simulation results}
In this section, we provide simulation results to validate the effectiveness of proposed SFL-GA and the efficiency of developed algorithm design.

\subsection{Experiment Setup}
\subsubsection{Proposed training setting}  The experiments are conducted on a environment with a server and $N = 10$ devices. The learning task is to train a Convolutional Neural Networks (CNN) model for different classification tasks. To evaluate our proposed scheme, we conduct the experiments over three different datasets: MNIST, fashion MNIST and CIFAR-10 datasets. 
We use similar model architectures as adopted in \cite{McMahanAISTATS2017} for model training. 
The max CPU-cycle frequency $f^{n,\max}_t$ for each client is $0.1$ GHz, and the total CPU-cycle frequency for server is $100$ GHz. We assume the computation workloads for each client and server are set to $\gamma^n_F = \gamma^n_B = 5.6$ MFlops and $\gamma^n_F = \gamma^n_B = 86.01 $ MFlops, respectively \cite{XuTWC2023}.
\subsubsection{Wireless communication setting} We assume that the path loss of wireless channels between devices and the edge server is given by $128.1+37.6log10(d)$ (in dB) , where $d$ represents the distance  in kilometer (km) \cite{XingTWC2018}.
 We assume that the thermal noise spectrum density $N_0 = -174$ dBm. The maximum transmit power budgets for each client and server are $p^n_{\max} = 25$ dBm and $P = 33$ dBm, respectively. The total bandwidth $B = 20$ MHz. 
 % if not specified. otherwise

\begin{figure*}
	\centering
	\subfigure[MNIST]{
		\begin{minipage}[t]{0.32\linewidth}
			\centering
			\includegraphics[width=2.2in]{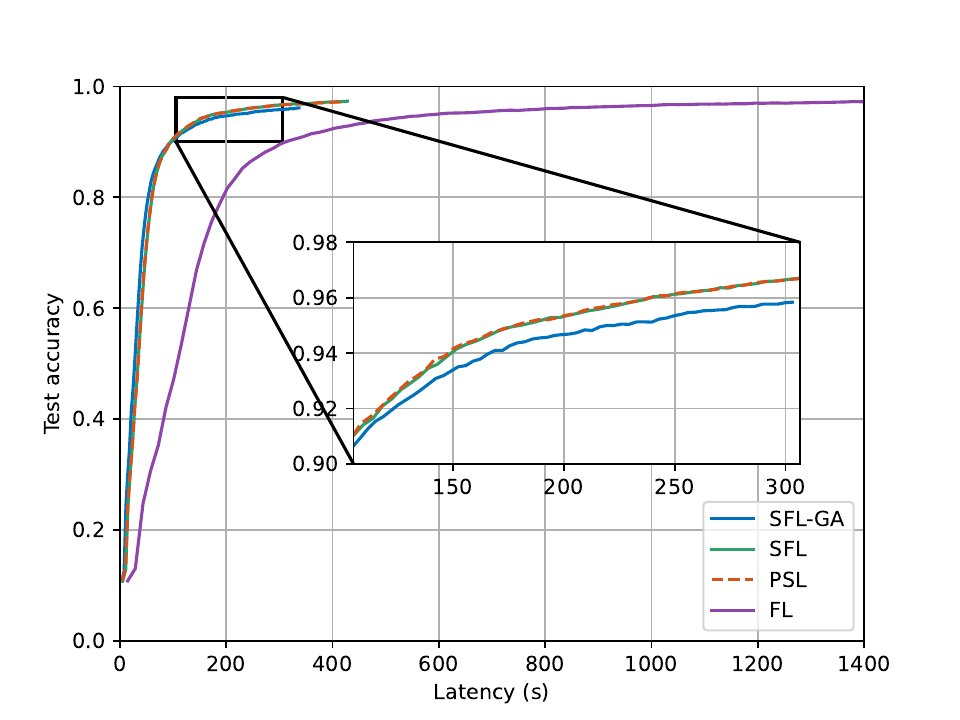}
		\end{minipage}
	}%
	\subfigure[FMNIST]{
		\begin{minipage}[t]{0.32\linewidth}
			\centering
			\includegraphics[width=2.2in]{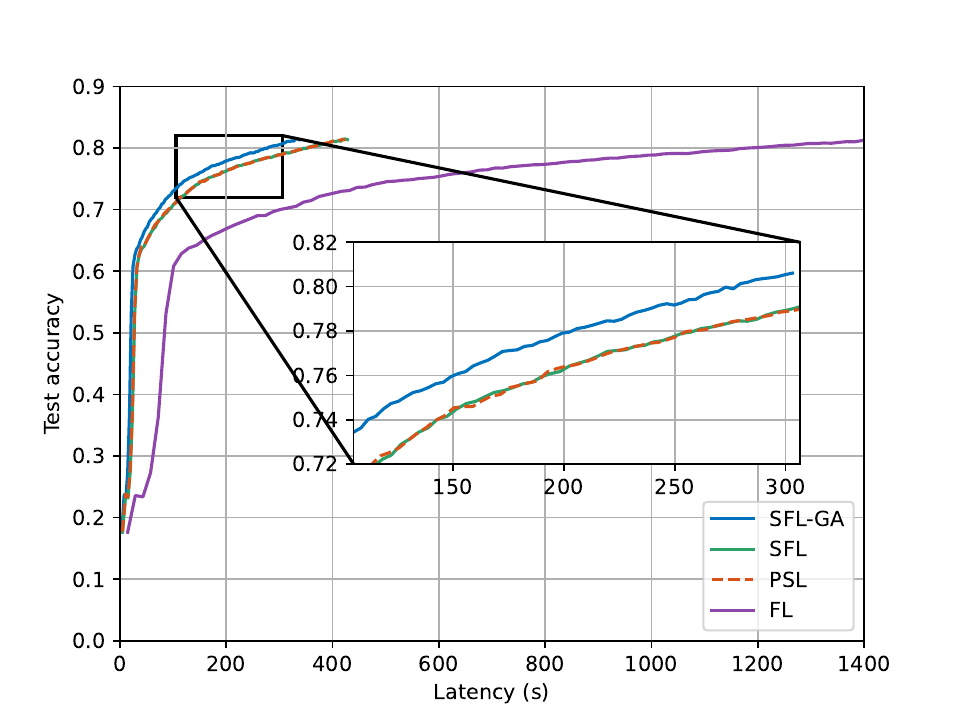}
		\end{minipage}
	}%
	%此处的空行很重要，想让图片在什么地方换行就在代码对应位置空行
	\subfigure[CIFAR-10]{
		\begin{minipage}[t]{0.32\linewidth}
			\centering
			\includegraphics[width=2.25in]{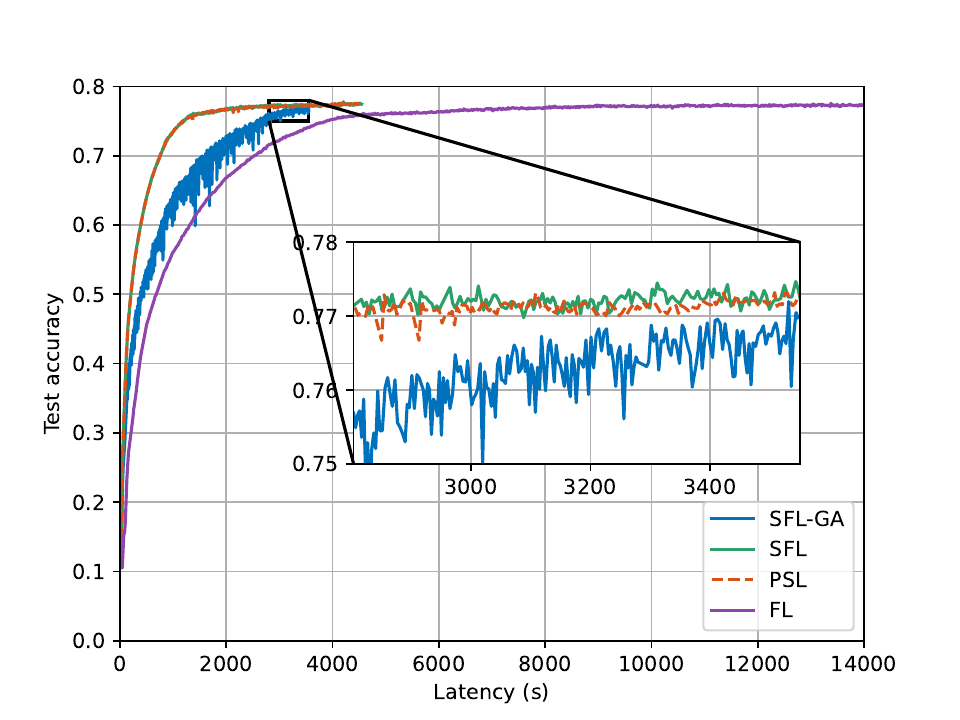}
		\end{minipage}
	}%
	\centering
	\caption{Accuracy under latency of different schemes.}
	\label{latencyScheme}
\end{figure*}

\begin{figure*}
	\centering
	\subfigure[MNIST]{
		\begin{minipage}[t]{0.32\linewidth}
			\centering
			\includegraphics[width=2.2in]{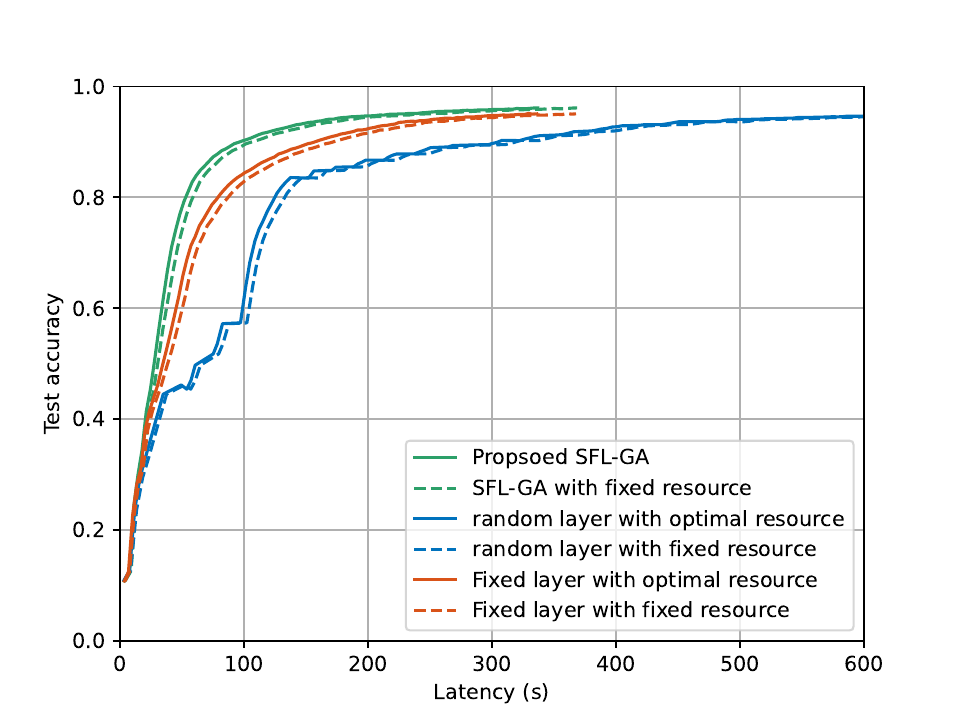}
		\end{minipage}
	}%
	\subfigure[FMNIST]{
		\begin{minipage}[t]{0.32\linewidth}
			\centering
			\includegraphics[width=2.2in]{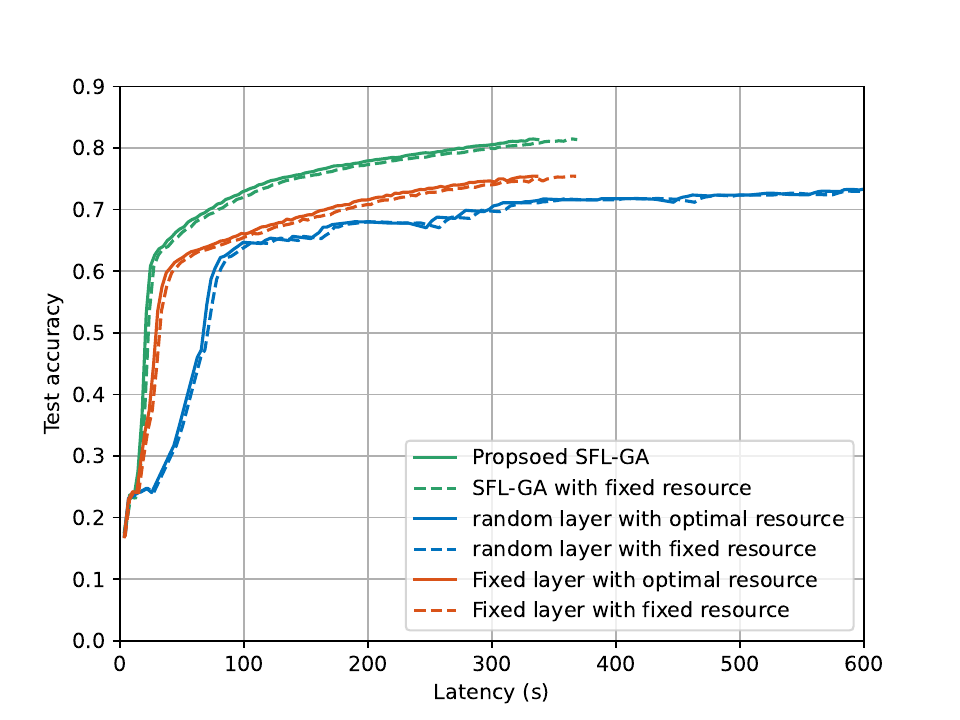}
		\end{minipage}
	}%
	%此处的空行很重要，想让图片在什么地方换行就在代码对应位置空行
	\subfigure[CIFAR-10]{
		\begin{minipage}[t]{0.32\linewidth}
			\centering
			\includegraphics[width=2.25in]{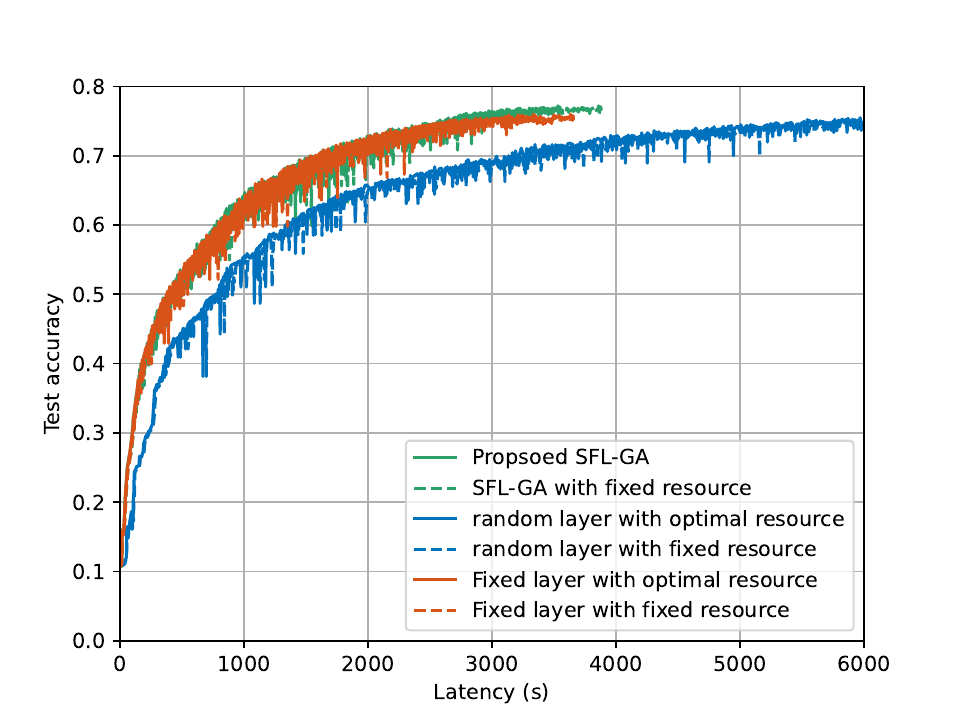}
		\end{minipage}
	}%
	\centering
	\caption{Accuracy vs latency over different resource strategies.}
	\label{latency}
\end{figure*}

\subsection{Theoretical Analyses}
Fig. \ref{ConvergenceCutting} illustrates the convergence behavior across different cutting points under various datasets. Here, the SFL scheme serves as a benchmark for validating our theoretical analyses. It is evident that SFL outperforms the proposed SFL-GA in terms of communication rounds. Moreover, the convergence performance of SFL-GA deteriorates with an increasing cutting point. For example, while SFL-GA with cutting point $v=1$ achieves approximately $95$\% test accuracy over the MINIST dataset, SFL-GA with cutting point $v=4$ only achieves about $82$\% after 100 communication rounds. 
This observation aligns with our theoretical analysis, indicating that a smaller client-side model size leads to better convergence performance for SFL-GA.
% This is consistent with our theoretical analysis that the cutting layer selection significant affects the client-side model update due to gradient aggregation, which finally determines the convergence performance of SFL-GA. 

Fig. \ref{Communication} presents communication overheads versus convergence performance across various existing schemes, including SFL-GA, traditional SFL, and PSL. It is observed that our proposed SFL-GA demonstrates greater communication efficiency compared to traditional SFL and PSL, as it achieves comparable test accuracy with significantly lower communication overhead. 
For example, the communication overhead for SFL-GA to achieve approximately $94$\% test accuracy is below 20 MB on the MNIST dataset, whereas it exceeds $40$ MB for traditional SFL. This underscores the effectiveness of our proposed SFL-GA in terms of communication efficiency. Additionally, we note that the communication overhead for PSL is marginally lower than that of traditional SFL for achieving the same test accuracy. This discrepancy arises from the fact that PSL does not necessitate client-side model aggregation, unlike SFL.

Fig. \ref{latencyScheme} illustrates the accuracy versus latency across different datasets for different schemes. 
% To demonstrate the effectiveness of our proposed SFL-GA framework, we compare it against existing benchmarks, including FL, traditional SFL, and PSL. 
As shown in Fig. \ref{latencyScheme}, FL exhibits the highest latency to achieve convergence, thereby demonstrating the poorest performance. This is due to the fact that FL updates the entire model on clients with limited computational resources. In contrast, SFL-GA, SFL, and PSL offload portions of the ML model to a server with greater computational power for model training, thus reducing the latency required for convergence. In particular, our proposed SFL-GA achieves similar test accuracy compared to SFL and PSL. However, as shown in Fig. \ref{Communication}, SFL-GA significantly reduces communication overhead, highlighting the superiority of our framework.

\subsection{Effectiveness of the Proposed Algorithms}
\begin{figure}
    \centering
    \includegraphics[width=2.5in]{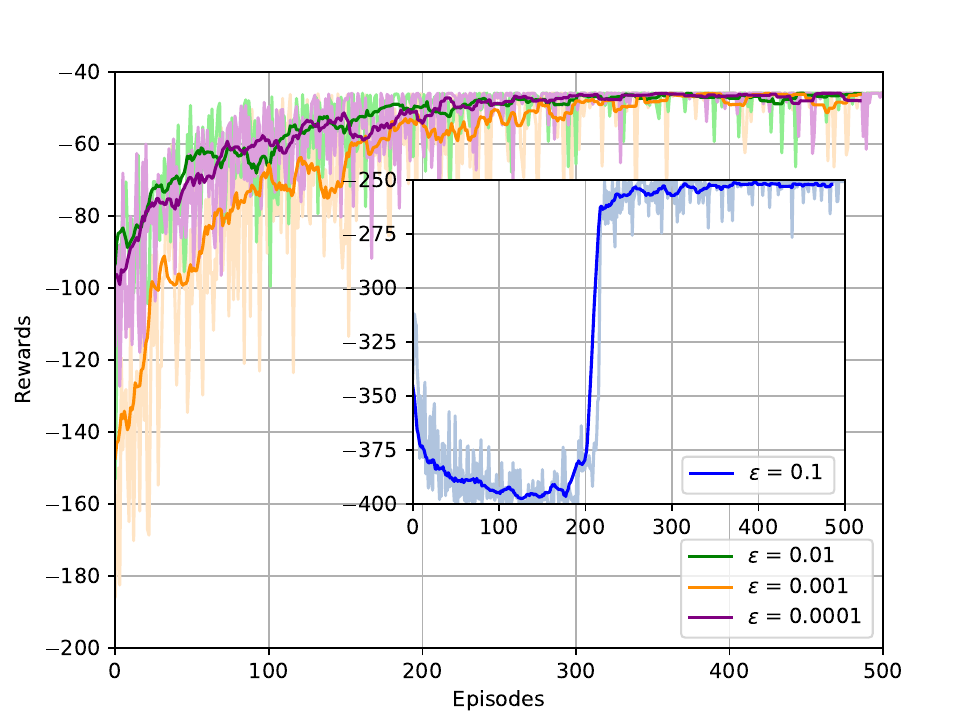}
    \caption{Convergence performance of Algorithm 1}
    \label{Algorithm1}
\end{figure}
% \begin{figure}[htbp]
% \centering
% \begin{minipage}[t]{0.48\textwidth}
% \centering
% \includegraphics[width=3.0in]{DDQN_rewards.pdf}
% \caption{Convergence performance of Algorithm 1.}
% \label{Algorithm1}
% \end{minipage}
% \begin{minipage}[t]{0.48\textwidth}
% \centering
% \includegraphics[width=2.9in]{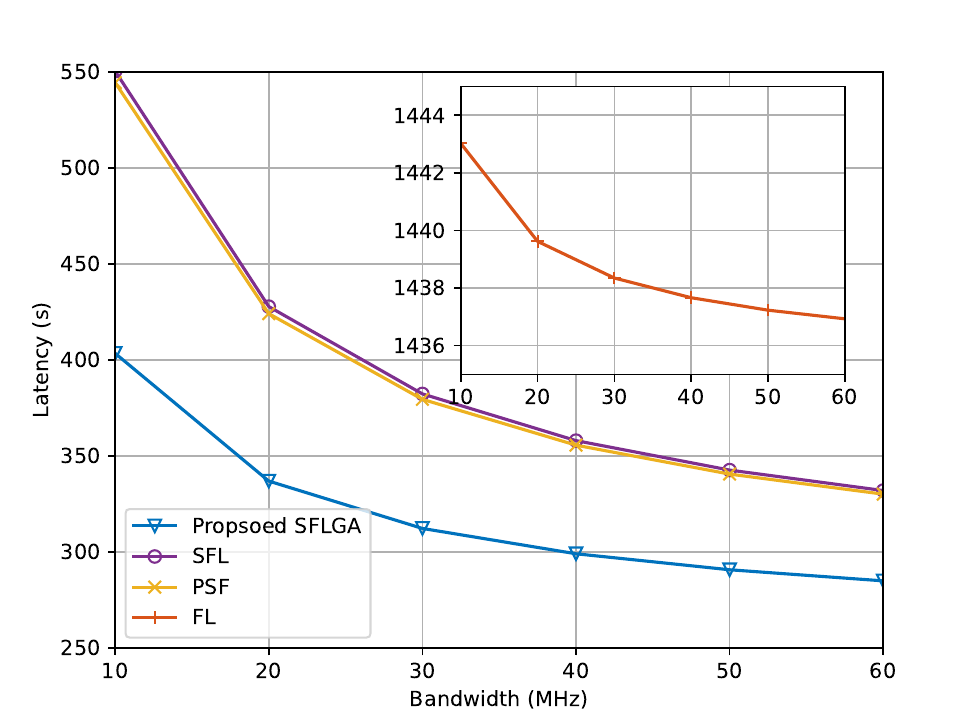}
% \caption{XXX.}
% \label{DifferentB}
% \end{minipage}
% \end{figure}

Fig. \ref{latency} evaluates accuracy against latency under various resource allocation strategies. We consider both fixed cutting layer and random cutting layer strategies, each assessed using optimal and fixed computation and communication resource allocations as benchmarks. It is seen that that our proposed Algorithm \ref{ProAlg1} for SFL-GA achieves the shortest latency for convergence among the benchmarks. Additionally, the selection of the cutting layer significantly impacts latency. For instance, with identical computation and communication resource allocations, Algorithm \ref{ProAlg1} substantially reduces latency compared to the random strategy across different datasets. 

Fig. \ref{Algorithm1} illustrates the convergence of the proposed Algorithm \ref{ProAlg1} under various privacy constraints. The results clearly show that the rewards converge within 500 episodes across different constraints, highlighting the effectiveness of Algorithm \ref{ProAlg1}. Meanwhile, the convergence points of the rewards vary depending on the $\epsilon$ value, underscoring the effectiveness of our designed algorithm. For instance, with $\epsilon = 0.001$, the reward converges to approximately $-45$, whereas with $\epsilon = 0.0001$, it converges to about $-250$, demonstrating the significant impact of privacy constraints on model training.

\begin{figure}
    \centering
    \includegraphics[width=2.5in]{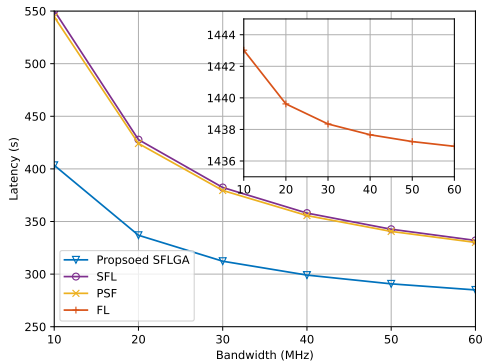}
    \caption{Latency under different bandwidth allocation.}
    \label{DifferentB}
\end{figure}

Fig. \ref{DifferentB} presents the latency under different bandwidth allocations in MNIST datasets. It is evident that latency decreases for all schemes as the available bandwidth increases. This is reasonable since more bandwidth leads to a higher transmission rate, thereby reducing communication latency. Additionally, the proposed SFL-GA achieves the lowest latency for given bandwidth budgets compared to the benchmarks, including FL, traditional SFL, and PSL. In particular, the latency of our framework is significantly lower than that of both SFL and PSL. This reduction results from the proposed gradient aggregation scheme, in which the aggregated gradients of the smashed data are broadcast to all clients. Moreover, it is worth noting that the latency of SFL is slightly higher than that of PSL. This is due to the additional communication overhead in SFL, as it requires client-side model aggregation updates in each communication round.

\section{Conclusion}
In this work, we proposed an SFL-GA framework to improve both communication and computation efficiency for traditional SFL. Specifically, the framework enabled dynamic model cutting point selection by considering wireless network environment, privacy constraints, and computational burden. Given the cutting point, the gradients of smashed data were aggregated before broadcasting, effectively reducing communication overhead. 
We theoretically analyzed the impact of the cutting point selection on convergence performance of the proposed SFL-GA framework. Furthermore, we formulated an optimization problem to further enhance the communication-and-computation efficiency of the framework, considering the model convergence rate, latency, as well as privacy leakage. To deal with the problem, an efficient joint CCC strategy was designed by integrating the DDQN algorithm and optimization method. Extensive simulation results were provided to verify the effectiveness of the proposed framework and demonstrate the efficiency of the proposed algorithm.

% if have a single appendix:
%\appendix[Proof of the Zonklar Equations]
% or
%\appendix  % for no appendix heading
% do not use \section anymore after \appendix, only \section*
% is possibly needed

% use appendices with more than one appendix
% then use \section to start each appendix
% you must declare a \section before using any
% \subsection or using \label (\appendices by itself
% starts a section numbered zero.)
%

\appendices
% \section{Proof of the First Zonklar Equation}
% Appendix one text goes here.

\section{Proof of Lemma 1} \label{AppenB}

To prove Lemma 1, we first derive the following auxiliary variables 
 Averaged Mini-batch Gradient Of SFL-GA, as 
\begin{equation}
\Bar{\mathbf{g}}^{n}_{t}
    =  \left[ 
\begin{matrix}
\Bar{\mathbf{g}}^{s,n}_{t}\\
  \Bar{\mathbf{g}}^{c}_{t}
\end{matrix}
\right]
    =  \left[ 
\begin{matrix}
 \frac{1}{\tau}\sum_{i = 1}^{\tau}\mathbf{g}^{s,n}_{t,i}\\
 \frac{1}{\tau}  \sum_{i = 1}^{\tau} \mathbf{g}^{c}_{t,i}
\end{matrix}
\right].    %\frac{1}{\tau} \sum_{i=0}^{\tau-1}  \mathbf{g}^{n}_{t,i},
\end{equation}

Meanwhile, we can expressed the Averaged Full-batch Gradient of SFL-GA as 
\begin{equation}
    \Bar{\mathbf{h}}^{n}_{t} 
    =  \left[ 
\begin{matrix}
 \Bar{\mathbf{h}}^{s,n}_{t}\\
 \Bar{\mathbf{h}}^{c}_{t}
\end{matrix}
\right]
=  \left[ 
\begin{matrix}
 \frac{1}{\tau} \sum_{i = 1}^{\tau}   \nabla_{\mathbf{w}^{s}} F\left( \mathbf{w}^{n}_{t,i}\right) \\ %\sum^{N}_{n =1} \rho^n
 \frac{1}{\tau}\sum_{i = 1}^{\tau} \nabla_{\mathbf{w}^{c}} \tilde{F}\left( \mathbf{w}^{}_{t,i}\right)
\end{matrix}
\right].   %\frac{1}{\tau} \sum_{i=0}^{\tau-1}  \nabla F(\mathbf{w}^{n}_{t,i}),
\end{equation}

Moreover, the Averaged Full-batch Gradient of SFL can be written as 
\begin{equation}
    \nabla \Bar{F}(\mathbf{w}^{n}_{t})
    =  \left[ 
\begin{matrix}
\nabla \Bar{F}(\mathbf{w}^{s,n}_{t})\\
 \nabla \Bar{F}(\mathbf{w}^{c}_{t})
\end{matrix}
\right]
=  \left[ 
\begin{matrix}
 \frac{1}{\tau} \sum_{i = 1}^{\tau}   \nabla_{\mathbf{w}^{s}} F\left( \mathbf{w}^{n}_{t,i}\right) \\ %\sum^{N}_{n =1} \rho^n
 \frac{1}{\tau}\sum_{i = 1}^{\tau} \nabla_{\mathbf{w}^{c}} F\left( \mathbf{w}^{n}_{t,i}\right)
\end{matrix}
\right].  
\end{equation}

Then, the update of the global model between two consecutive adjacent rounds is formulated as 
\begin{equation}
    \begin{split}
        \mathbf{w}_{t+1} - \mathbf{w}_{t} &= \sum_{n =1}^{N} \rho^n \left[  \mathbf{w}^n_{t+1} - \mathbf{w}^n_{t} \right]
        % &= \sum_{n =1}^{N} \rho^n \left[  \mathbf{w}^n_{t} - \mathbf{w}^n_{t} - \eta  \sum_{i = 1}^{\tau}  \mathbf{g}^{n}_{t,i} \right]  \\
        = -\eta \tau \sum_{n=1}^{N}\rho^{n} \Bar{\mathbf{g}}^{n}_{t}.
    \end{split}
\end{equation}

According to the assumption of \textbf{L-smoothness}, the improvement on the global loss can be expressed as
\begin{equation}\label{lossimprove}
\begin{split}
     &\mathbb{E} \left(F\left(\mathbf{w}_{t+1}\right) - F\left(\mathbf{w}_{t}\right)\right) %\leq\\%\mathbb{E} \left[\left\langle \nabla F(\mathbf{w}_{t}), \mathbf{w}_{t+1} - \mathbf{w}_{t}\right\rangle \right]  + \frac{L}{2} \mathbb{E} \left[ \|\mathbf{w}_{t+1} - \mathbf{w}_{t}\|^2 \right] \\ %\left(L-Smoothness\right)
     % & = -\eta \tau \left\langle \nabla F(\mathbf{w}_{t}), \sum_{n=1}^{N}\rho^{n} \Bar{\mathbf{g}}^{n}_{t}\right\rangle  + \frac{L\eta^2\tau^2 }{2} \| \sum_{n=1}^{N}\rho^{n} \Bar{\mathbf{g}}^{n}_{t} \|^2 \\
     % & = -\eta \tau \mathbb{E} \left[ \left\langle \nabla F(\mathbf{w}_{t}), \sum_{n=1}^{N}\rho^{n} \left( \Bar{\mathbf{g}}^{n}_{t} - \Bar{\mathbf{h}}^{n}_{t} \right)+ \sum_{n=1}^{N}\rho^{n} \Bar{\mathbf{h}}^{n}_{t} \right\rangle \right] \\
     % & ~~~~ + \frac{L\eta^2 \tau^2 }{2} \mathbb{E} \left[  \left\| \sum_{n=1}^{N}\rho^{n} \left( \Bar{\mathbf{g}}^{n}_{t} - \Bar{\mathbf{h}}^{n}_{t} \right)+ \sum_{n=1}^{N}\rho^{n} \Bar{\mathbf{h}}^{n}_{t} \right\|^2\right] \\
     \overset{(a)}{\leq} -\eta \tau  \left\langle \nabla F(\mathbf{w}_{t}), \sum_{n=1}^{N}\rho^{n} \Bar{\mathbf{h}}^{n}_{t} \right\rangle  \\
     & + L\eta^2 \tau^2 \mathbb{E} \left[ \left\| \sum_{n=1}^{N}\rho^{n} \left( \Bar{\mathbf{g}}^{n}_{t} - \Bar{\mathbf{h}}^{n}_{t} \right) \right\|^2  + \left\| \sum_{n=1}^{N}\rho^{n} \Bar{\mathbf{h}}^{n}_{t} \right\|^2 \right] \\ %\mathbb{E} \left[  \right]
     % & = -\eta \mathbb{E} \left[ \left\langle \nabla F(\mathbf{w}_{t}), \sum_{n=1}^{N}\rho^{n} \mathbf{h}^{n}_{t} -\nabla F(\mathbf{w}_{t}) + \nabla F(\mathbf{w}_{t}) \right\rangle \right] \\ 
     % & +  L\eta^2 \mathbb{E} \left[ \| \sum_{n=1}^{N}\rho^{n} \left( \mathbf{d}^{n}_{t} - \mathbf{h}^{n}_{t} \right) \|^2  + \| \sum_{n=1}^{N}\rho^{n} \mathbf{h}^{n}_{t}  + \nabla F(\mathbf{w}_{t})- \nabla F(\mathbf{w}_{t})\|^2 \right] \\
     % & \overset{(b)}{\leq} -\eta\| \nabla F(\mathbf{w}_{t}) \|^2 - \eta \mathbb{E} \left[ \left\langle \nabla F(\mathbf{w}_{t}), \sum_{n=1}^{N}\rho^{n} \mathbf{h}^{n}_{t} - \nabla F(\mathbf{w}_{t}) \right\rangle \right]\\
     % & +  L\eta^2 \mathbb{E} \left[ \| \sum_{n=1}^{N}\rho^{n} \left( \mathbf{d}^{n}_{t} - \mathbf{h}^{n}_{t} \right) \|^2  + 2\| \sum_{n=1}^{N}\rho^{n} \mathbf{h}^{n}_{t} - \nabla F(\mathbf{w}_{t}) \|^2  + 2\|  \nabla F(\mathbf{w}_{t})\|^2 \right] \\
     & \overset{(b)}{=} -\frac{\eta\tau}{2} \left\| \nabla F(\mathbf{w}_{t}) \right\|^2  + L\eta^2 \tau^2   \mathbb{E} \left[ \left\| \sum_{n=1}^{N}\rho^{n} \left( \Bar{\mathbf{g}}^{n}_{t} - \Bar{\mathbf{h}}^{n}_{t} \right) \right\|^2 \right] \\ 
     & + \! \! \! \frac{\eta\tau}{2} \! \! \left\| \nabla F(\mathbf{w}_{t}) \! \!  - \sum_{n=1}^{N}\rho^{n} \Bar{\mathbf{h}}^{n}_{t} \right\|^2 \! \!  + \! \!  \frac{\eta \tau  \left(2 L\eta \tau - 1 \right)  }{2}\left\| \sum_{n=1}^{N}\rho^{n} \Bar{\mathbf{h}}^{n}_{t} \right\|^2 \\
     & \overset{(c)}{\leq} - \frac{\eta\tau}{2} \left\| \nabla F(\mathbf{w}_{t}) \right\|^2 + \frac{\eta\tau}{2} \mathbb{E} \left[  \left\| \nabla F(\mathbf{w}_{t}) - \sum_{n=1}^{N}\rho^{n} \Bar{\mathbf{h}}^{n}_{t} \right\|^2 \right] \\
     & + L\eta^2 \tau^2 \mathbb{E} \left[  \left\| \sum_{n=1}^{N}\rho^{n} \left( \Bar{\mathbf{g}}^{n}_{t} - \Bar{\mathbf{h}}^{n}_{t} \right) \right\|^2 \right] \\
     & \overset{(d)}{=} - \frac{\eta\tau}{2} \left\| \nabla F(\mathbf{w}_{t}) \right\|^2 + \frac{\eta\tau}{2} \mathbb{E} \left[  \left\| \nabla F(\mathbf{w}_{t}) - \sum_{n=1}^{N}\rho^{n} \Bar{\mathbf{h}}^{n}_{t} \right\|^2 \right]\\
     & + L\eta^2 \tau^2 \sum_{n=1}^{N}\left(\rho^{n}\right)^2 \frac{1}{\tau^2} \sum_{i = 1}^{\tau}\mathbb{E} \left[  \left\| \left( \mathbf{g}^{n}_{t} - \mathbf{h}^{n}_{t} \right) \right\|^2 \right] \\
     & \overset{(e)}{\leq} - \frac{\eta\tau}{2}\left\| \nabla F(\mathbf{w}^{}_{t}) \right\|^2  + L\eta^2 \tau \sigma^2 \sum_{n=1}^{N}\left(\rho^{n}\right)^2 \\
     & + \frac{\eta\tau}{2} \sum_{n=1}^{N}\rho^{n}  \underbrace{ \mathbb{E} \left[  \| \nabla F(\mathbf{w}^{}_{t}) -  \Bar{\mathbf{h}}^{n}_{t} \|^2 \right] }_{A_1},
\end{split}
\end{equation} 
where $(a)$ results from the facts that $\mathbb{E} \left[ \Bar{\mathbf{g}}^{n}_{t} - \Bar{\mathbf{h}}^{n}_{t} \right]= \frac{1}{\tau} \sum_{i = 1}^{\tau} \mathbb{E} \left[\mathbf{g}^{n}_{t} - \mathbf{h}^{n}_{t} \right]= 0$ and $\| a +b\|^2 \leq 2\| a \|^2 + 2\| b\|^2$.  $(b)$ comes from the fact that $2\left\langle a,b\right\rangle = \|a\|^2 + \|b\|^2 -\|a-b\|^2$. $(c)$ is hold when $\eta \leq \frac{1}{2L\tau}$. 
$(d)$ comes form the fact that $\mathbb{E}\left[ \left\langle  \Bar{\mathbf{g}}^{i}_{t} - \Bar{\mathbf{h}}^{i}_{t}, \Bar{\mathbf{g}}^{j}_{t} - \Bar{\mathbf{h}}^{j}_{t} \right\rangle \right] = 0, \forall i\ne  j$. $(e)$ is achieved due to the Jensen Inequality.

To find the upper bound of \eqref{lossimprove}, we applied the inequality of $\|a +b\|^2 \leq 2\| a \|^2 + 2\| b\|^2$ on $A_1$ that
\begin{equation} \label{A_1}
    \begin{split}
        & A_1 = \mathbb{E} \left[  \left\| \left( \nabla F(\mathbf{w}^{}_{t}) - \nabla \Bar{F}(\mathbf{w}^{n}_{t}) \right) + \left( \nabla \Bar{F}(\mathbf{w}^{n}_{t})-  \Bar{\mathbf{h}}^{n}_{t} \right) \right\|^2 \right]\\  %\left\langle \right\rangle
        & \overset{}{\leq} 2  \mathbb{E} \left[   \left\| \left( \nabla F(\mathbf{w}^{}_{t}) - \nabla \Bar{F}(\mathbf{w}^{n}_{t}) \right) \right\|^2  +   \left\| \left( \nabla \Bar{F}(\mathbf{w}^{n}_{t})-  \Bar{\mathbf{h}}^{n}_{t} \right) \right\|^2  \right] \\
        % & = 2  \left\| \left( \nabla F(\mathbf{w}^{}_{t}) - \frac{1}{\tau} \sum_{i = 1}^{\tau}  \nabla F(\mathbf{w}^{n}_{t,i}) \right) \right\|^2  + 2 \left\| \frac{1}{\tau} \sum_{i = 1}^{\tau} \left( \nabla F(\mathbf{w}^{n}_{t,i})-  \mathbf{h}^{n}_{t,i} \right) \right\|^2\\
        &\overset{(e)}{\leq} \frac{2 }{\tau} \sum_{i = 1}^{\tau} \mathbb{E} \left( \left\| \nabla F(\mathbf{w}^{}_{t}) \! \! \! -  \nabla F(\mathbf{w}^{n}_{t,i})  \right\|^2 \! \! \! +\left\| \nabla F(\mathbf{w}^{n}_{t,i})\!  - \!   \mathbf{h}^{n}_{t,i} \right\|^2 \right)\\
        & \overset{(f)}{\leq} \frac{2}{\tau} \sum_{i = 1}^{\tau} \mathbb{E} \left\| \left( \nabla F(\mathbf{w}^{}_{t}) - \nabla F(\mathbf{w}^{n}_{t,i}) \right) \right\|^2  +  2\Gamma\left(\phi^{}_{t}\left(v\right)\right)\\
        & \overset{(g)}{\leq} \frac{2L^2}{\tau} \sum_{i = 1}^{\tau}  \underbrace{ \mathbb{E} \left\|  \mathbf{w}^{}_{t} - \Bar{\mathbf{w}}^{n}_{t,i} \right\|^2}_{A_2}  +  2\Gamma\left(\phi^{}_{t}\left(v\right)\right),
        % &\overset{(f)}{\leq}
    \end{split}
\end{equation}
where $(e)$ comes form the Jensen's inequality, and $(f)$ is achieved due to \textbf{Assumption 4}. $(g)$ follows the Lipschitz-smooth property. $\Bar{\mathbf{w}}^{n}_{t,i}$ is the model of client $n$ obtained after the $i$-th local update based on the SFL gradient $\nabla F(\mathbf{w}^{n}_{t,i}; \xi)$. % where (e) results from the fact that $\| a+b\|^2 \leq  2\|a\|^2 + 2\|b\|^2$. 
Similar with \eqref{A_1}, we further find the upper bound of $A_2$ in \eqref{A_1} as 
\begin{equation}\label{A_2}
    \begin{split}
        &A_2 = \eta^2\mathbb{E}  \left\| \sum_{j = 1}^{i} \nabla F\left( \mathbf{w}^{n}_{t,i}; \xi \right)\right\|^2\\ 
        % & A_2 \leq \\ 
        &\leq 2 \eta^2 \mathbb{E} \left\| \sum_{j = 1}^{i} \left[\nabla F\left( \mathbf{w}^{n}_{0,i}; \xi \right) - \nabla F\left( \mathbf{w}^{n}_{t,i}\right)\right] \right\|^2 \\
        & ~~~+ 2\eta^2 \mathbb{E}  \left\| \sum_{j = 1}^{i} \nabla F\left( \mathbf{w}^{n}_{t,i}\right) \right\|^2 \\
        &  \overset{(h)}{\leq} 2 i \eta^2 \mathbb{E} \left[  \sum_{j = 1}^{i}  \left\| \nabla F\left( \mathbf{w}^{n}_{t,j}; \xi \right) \! \! \! - \! \!  \nabla F\left( \mathbf{w}^{n}_{t,j} \right) \right\|^2 \! + \! \! \left\|\nabla F\left( \mathbf{w}^{n}_{t,j} \right) \right\|^2 \right] \\
        % & \overset{(i)}\leq 2 i \eta^2 \sum_{j = 1}^{i}  \sigma^2 +  2\eta^2 i \sum_{j = 1}^{i} \left\|\nabla F\left( \mathbf{w}^{n}_{t,j} \right) \right\|^2 \\
        & \overset{(i)}{\leq} 2 i \eta^2 \sum_{j = 1}^{i}  \sigma^2 +  2\eta^2 i \sum_{j = 1}^{\tau} \mathbb{E} \left\|\nabla F\left( \mathbf{w}^{n}_{t,j} \right) \right\|^2,
    \end{split}
\end{equation}
where $(h)$ results from Cauchy–Schwarz inequality. $(i)$ is hold from \textbf{Assumption 2}. Applying $\sum_{i = 1}^{\tau_{}} i = \frac{\tau_{}\left(\tau_{} - 1 \right)}{2}$, we can obtain 
 \begin{equation}\label{eq49}
	\begin{split}
		& \sum_{i = 1}^{\tau_{}} \mathbb{E}   \left\|  \mathbf{w}^{n}_{t} \! - \! \mathbf{w}^{n}_{t,i} \right\|^2 \! \leq \! \eta^2 \tau_{} \! \left( \tau_{} -1\right) \left(\sigma^2  \! + \! \sum_{j = 1}^{\tau_{} } \mathbb{E} \left\|\nabla F\left( \mathbf{w}^{n}_{t,j} \right) \right\|^2\right)  \\
		& \leq \eta^2 \tau_{}\left( \tau_{} -1\right) \! \! \!  \left(\! \! \sigma^2 \! \! \!  + 2 L^2 \sum_{j = 1}^{\tau_{} } \mathbb{E}  \left\| \Bar{\mathbf{w}}^{n}_{t,j}  \! \!  - \mathbf{w}_{t} \right\|^2 \! \! \!  +  \!  2  \sum_{j = 1}^{\tau_{} } \left\|  \nabla F(\mathbf{w}_{t}) \right\|^2 \right)\\
            & = \eta^2 \tau_{}\left( \tau_{} -1\right)  \sigma^2 + 2\eta^2 \tau^(2)_{}\left( \tau_{} -1\right) \left\|  \nabla F(\mathbf{w}_{t}) \right\|^2 \\
            & ~~~ + 2L^2\eta^2 \tau_{}\left( \tau_{} -1\right) \sum_{j = 1}^{\tau_{} } \mathbb{E} \left\| \Bar{\mathbf{w}}^{n}_{t,j}  - \mathbf{w}_{t} \right\|^2. 
	\end{split}
\end{equation}
By rearranging \eqref{eq49}, we have
\begin{equation}\label{D_1_1}
	\begin{split}
		\sum_{i = 1}^{\tau_{}}  \left\|  \mathbf{w}_{t} -  \Bar{\mathbf{w}}_{t,i} \right\|^2 & \leq \frac{\eta^2 \sigma^2 \tau_{}\left( \tau_{} -1\right)}{1 - 2 L^2 \eta^2 \tau_{}\left( \tau_{} -1\right)} \\
        & + \frac{ 2 \eta^2  \tau^{2}_{}\left( \tau_{} -1\right)}{1 - 2 L^2 \eta^2 \tau_{}\left( \tau_{} -1\right)}\left\|  \nabla F(\mathbf{w}_{t}) \right\|^2\\
        & = \frac{\eta^2 \sigma^2 \tau_{}\left( \tau_{} -1\right)}{1 - A} + \frac{A}{1 - A}\left\|  \nabla F(\mathbf{w}_{t}) \right\|^2
	\end{split}
\end{equation}
where $A = 2 L^2\eta^2 \tau \left( \tau -1 \right) $.

% we have $A \leq 2 L^2\eta^2 \tau^2 \leq \frac{1}{2} $ due to the fact that $\eta \leq \frac{1}{2L\tau}$. Therefore,
As a result,  $A_1$ in \eqref{A_1} is bounded by
\begin{equation}\label{A_1_1}
    \begin{split}
        A_1 & \leq  \frac{2 L^2\eta^2 \sigma^2 \left( \tau_{} -1\right)}{1-A}  + \frac{2A}{1-A}\left\|  \nabla F(\mathbf{w}_{t}) \right\|^2+  2\Gamma\left(\phi^{}_{t}\left(v\right)\right)\\
        & \overset{(j)}{\leq} \frac{5}{2} L^2\eta^2 \sigma^2 \left( \tau_{} -1\right)  +  \frac{1}{2}\left\|  \nabla F(\mathbf{w}_{t}) \right\|^2 +  2\Gamma\left(\phi^{}_{t}\left(v\right)\right),
        % & \overset{(e)}{\leq} 2 L^2\eta^2 \sigma^2 \left( \tau_{} -1\right) + \frac{1}{2}\left\|  \nabla F(\mathbf{w}_{t}) \right\|^2  +  2\Gamma\left(\phi^{}_{t}\left(v\right)\right)
    \end{split}
\end{equation}
% \begin{equation}\label{A_1_1}
%     \begin{split}
%         A_1 & \leq   \frac{2 L^2\eta^2 \sigma^2 \left( \tau_{} -1\right)}{1 - 2 L^2 \eta^2 \tau_{}\left( \tau_{} -1\right)}   +  2\Gamma\left(\phi^{}_{t}\left(v\right)\right)\\
%         & + \frac{ 4 L^2 \eta^2  \tau^{}_{}\left( \tau_{} -1\right)}{1 - 2 L^2 \eta^2 \tau_{}\left( \tau_{} -1\right)}\left\|  \nabla F(\mathbf{w}_{t}) \right\|^2\\
%         & =  \frac{2 L^2\eta^2 \sigma^2 \left( \tau_{} -1\right)}{1-A}  + \frac{2A}{1-A}\left\|  \nabla F(\mathbf{w}_{t}) \right\|^2+  2\Gamma\left(\phi^{}_{t}\left(v\right)\right)\\
%         & \overset{(j)}{\leq} \frac{5}{2} L^2\eta^2 \sigma^2 \left( \tau_{} -1\right)  +  \frac{1}{2}\left\|  \nabla F(\mathbf{w}_{t}) \right\|^2 +  2\Gamma\left(\phi^{}_{t}\left(v\right)\right),
%         % & \overset{(e)}{\leq} 2 L^2\eta^2 \sigma^2 \left( \tau_{} -1\right) + \frac{1}{2}\left\|  \nabla F(\mathbf{w}_{t}) \right\|^2  +  2\Gamma\left(\phi^{}_{t}\left(v\right)\right)
%     \end{split}
% \end{equation}
where $(j)$ results from the fact that $A \leq \frac{1}{5}$.

Plug \eqref{A_1_1} back into \eqref{lossimprove}, we have
\begin{equation} \label{lossimprove1}
    \begin{split}
        &\mathbb{E} \left( F\left(\mathbf{w}_{t+1}\right) - F\left(\mathbf{w}_{t}\right)\right) \leq \! \! \! - \frac{\eta\tau}{2}\left\| \nabla F(\mathbf{w}^{}_{t}) \right\|^2 \! \! \!  + \! \! \! L\eta^2 \tau \sigma^2 \sum_{n=1}^{N}\left(\rho^{n}\right)^2\\
        &\! \! \!  + \frac{\eta\tau}{2} \sum_{n=1}^{N}\rho^{n}  \left( \frac{5}{2} L^2\eta^2 \sigma^2 \left( \tau_{} -1\right) \! \! \!  +  \frac{1}{2}\left\|  \nabla F(\mathbf{w}_{t}) \right\|^2 \! \! \! +  2\Gamma\left(\phi^{}_{t}\left(v\right)\right) \right)\\
        & = - \frac{\eta\tau}{4} \left\| \nabla F(\mathbf{w}^{}_{t}) \right\|^2 + \frac{5 L^2\eta^3 \sigma^2 \tau_{} \left( \tau_{} -1\right)}{4} \\
        & + L\eta^2 \tau \sigma^2 \sum_{n=1}^{N}\left(\rho^{n}\right)^2  + \eta\tau \Gamma\left(\phi^{}_{t}\left(v\right)\right).
        % & = \frac{\eta\sigma^2}{2} \left( \frac{A}{1-A}  \right) + \eta\tau \frac{2A}{1-A}\left\|  \nabla F(\mathbf{w}_{t}) \right\|^2+ \eta\tau \Gamma\left(\phi^{}_{t}\left(v\right)\right) 
    \end{split}
\end{equation}
This completes the proof.

\ifCLASSOPTIONcaptionsoff
  \newpage
\fi

\end{document}